\newtheorem{theorem}{Theorem}
\newtheorem{proposition}{Proposition}
\def\A{\mathbf{A}}
\def\Re{\mathbb{R}}
\newcommand{\cD}{\mathcal{D}}
\newcommand{\cO}{\mathcal{O}}
\newcommand{\cS}{\mathcal{S}}
\newcommand{\cN}{\mathcal{N}}
\newcommand{\cI}{\mathcal{I}}
\newcommand{\cT}{\mathcal{T}}
\newcommand{\Tpgd}{\mathcal{T}_{\gamma}}
\newcommand{\Thqs}{\mathcal{T}_{\mu}}
\newcommand{\Tadmm}{\mathcal{T}_{\alpha}}
\newcommand{\xmark}{\textcolor{black}{\ding{55}}}%
\def\p{\boldsymbol{p}}
\def\x{\boldsymbol{x}}
\def\y{\boldsymbol{y}}
\begin{document}

\title{Viscosity Stabilized Plug-and-Play Reconstruction}

\author{Arghya~Sinha,
        Trishit Mukherjee,
        and~Kunal~N.~Chaudhury

\thanks{A.~Sinha, T.~Mukherjee and K.~N.~Chaudhury are with the Indian Institute of Science, Bengaluru: 560012, India. Correspondence: arghyasinha@iisc.ac.in. A. Sinha was supported by the PMRF fellowship TF/PMRF-22-5534 from the Government of India. K.~N.~Chaudhury was supported by grant STR/2021/000011 from ANRF, Government of India.}
}

\markboth{Sinha \MakeLowercase{\textit{et al.}}: ViSTA-PnP}%
{Sinha \MakeLowercase{\textit{et al.}}: ViSTA-PnP}

\maketitle
\begin{abstract}
The plug-and-play (PnP) method uses a deep denoiser within a proximal algorithm for model-based image reconstruction (IR). Unlike end-to-end IR, PnP allows the same pretrained denoiser to be used across different imaging tasks, without the need for retraining. However, black-box networks can make the iterative process in PnP unstable. A common issue observed across architectures like CNNs, diffusion models, and transformers is that the visual quality and PSNR often improve initially but then degrade in later iterations. Previous attempts to ensure stability usually impose restrictive constraints on the denoiser. However, standard denoisers, which are freely trained for single-step noise removal, need not satisfy such constraints. We propose a simple data-driven stabilization mechanism that adaptively averages the potentially unstable PnP operator with a contractive IR operator.  This acts as a form of viscosity regularization, where the contractive component progressively dampens updates in later iterations, helping to suppress oscillations and prevent divergence. We validate the effectiveness of our stabilization mechanism across different proximal algorithms, denoising architectures, and imaging tasks.
\end{abstract}

\begin{IEEEkeywords}
image reconstruction, deep denoiser, regularization, stability.
\end{IEEEkeywords}

\section{Introduction}
\label{sec:intro}

\IEEEPARstart{T}{he} problem of recovering an image from noisy linear measurements comes up in applications such as deblurring, superresolution, magnetic resonance imaging, and tomography~\cite{bouman2022foundations}. The measurement process for such applications is modeled as
\begin{equation}
\label{eq:fm}
\y =\A \bar{\x}+ \boldsymbol{\epsilon},
\end{equation}
where $\bar{\x} \in \Re^n$ and $\y\in \Re^m$ represent the ground-truth and observed images, $\A \in \Re^{m \times n}$ is the forward operator, and $\boldsymbol{\epsilon} \in \Re^m$ represents noise. For example, in motion deblurring, $\A$ represents a blur operator, while in superresolution, it represents lowpass filtering followed by downsampling. This is a classical inverse problem with a rich literature, including variational methods~\cite{rudin1992nonlinear,geman1986markov}, trained networks~\cite{jin2017deep,guo_mambair_2025,liang_swinir_2021,zamir2022restormer}, denoiser-driven regularization~\cite{venkatakrishnan2013pnp,sreehari2016plug,romano2017little,zhang2021plug,hurault2022gradient},
and diffusion methods \cite{chung_diffusion_2022,kawar_denoising_2022,zhu_denoising_2023,lugmayr_repaint_2022}.

\subsection{PnP Algorithm}

Lately, denoisers have become popular tools for image generation and regularization~\cite{elad_image_2023}. The focus of this work is the plug-and-play (PnP) method, which leverages powerful denoisers as image regularizers within classical iterative algorithms~\cite{zhang2021plug,yuan2020plug,wei2020tuning,kamilov2023plug,ahmad2020plug,venkatakrishnan2013pnp}. The PnP method builds on the standard variational framework for solving~\eqref{eq:fm}, given by
\begin{equation}
\label{eq:fg}
\underset{\x \in \Re^n}{\min} \  f(\x) + g(\x), \quad f(\x) = \frac{1}{2} \lVert \A\x-\y \rVert^2,
\end{equation}
where $f: \Re^n \to \Re$ is a model-based loss and $g: \Re^n \to [0, \infty]$ is a convex regularizer that incorporates prior knowledge about the ground-truth image. 

The composite optimization problem~\eqref{eq:fg} can be solved iteratively using proximal algorithms~\cite{parikh2014proximal,bauschke2019convex}. A widely used algorithm is proximal gradient descent (PGD), which proceeds as follows:
\begin{equation}
\label{eq:pgd}
\begin{cases}
\x_0 \in \Re^n, \\
\x_{k+1} = \mathrm{prox}_{\gamma g} \big(\x_k - \gamma \nabla \! f(\x_k) \big) \qquad (k \geqslant 0),
\end{cases}
\end{equation}
where $\gamma > 0$ is the step size and $\mathrm{prox}_{\gamma g}$ denotes the proximal operator of the function $\gamma g$. 

The latter effectively acts as a denoiser or smoothing operator in~\eqref{eq:pgd}, helping to reduce artifacts introduced by the gradient step. Much research has focused on designing handcrafted image regularizers that admit a powerful proximal operator~\cite{bouman2022foundations}.

The original idea in PnP~\cite{venkatakrishnan2013pnp,sreehari2016plug} was to replace the proximal operator in~\eqref{eq:pgd} with an off-the-shelf a denoiser, i.e, the update in~\eqref{eq:pgd} is performed as
\begin{equation}
\label{eq:pnppgd}
\x_{k+1} = \cD \big(\x_k- \gamma \nabla \!f(\x_k)\big),
\end{equation}
where $\cD: \Re^n \to \Re^n$ is some denoising operator.
Proximal algorithms such as Alternating Direction Method of Multipliers~(ADMM) and Half-Quadratic Splitting~(HQS)~\cite{beck2017book,geman1995nonlinear} can also be used in place of PGD.

The key advantage of PnP over end-to-end reconstruction~\cite{zamir2022restormer,guo_mambair_2025,liang_swinir_2021} is that the forward model~\eqref{eq:fm} is used only during inference, and not during training. This removes the need for task-specific training, allowing the same pretrained denoiser to be used for different imaging tasks. However, this flexibility comes at a cost. Unlike end-to-end networks, which apply the model just once, PnP methods repeatedly invoke the denoiser within an iterative loop. This repeated use of a black-box network can lead to unstable behavior.

There has been considerable research on designing denoisers that ensure convergence of PnP algorithms~\cite{pesquet_learning_2021,hertrich_convolutional_2021,cohen_regularization_2021,hurault2022gradient,hurault_proximal_2022,nair2024averaged,goujon2023neural,goujon_learning_2024,pourya2025dealing}. The technical challenge is to train neural networks that not only offer theoretical convergence guarantees but also deliver strong denoising performance. Striking this balance often involves introducing structural constraints into the network through carefully designed architectures or tailored parameterizations.  However, this can limit the regularization capacity of the denoiser, which in turn may affect the reconstruction quality.~\cite{pesquet_learning_2021,nair2024averaged}.

\begin{figure*}[thbp]
    \centering
    \includegraphics[width=1\textwidth]{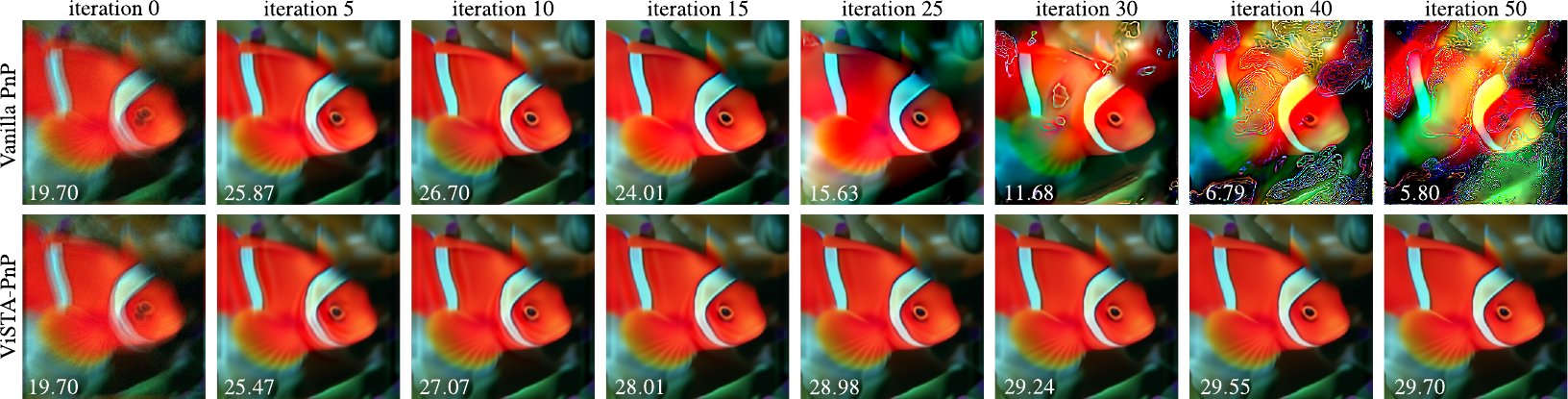}
    \caption{An example showing the instability of Vanilla-PnP (\textbf{top}) for motion deblurring on the \emph{fish} image~\cite{general100}. The base algorithm in this example is PGD~\eqref{eq:Tpgd} with step size $\gamma=2.2$, and the denoiser is MambaIRv1~\cite{guo_mambair_2025}. Notice how the PSNR steadily improves for the first $15$ iterations, following which the process degenerates and we see a propagation of artifacts across the image. The proposed stabilization (\textbf{bottom}) prevents the breakdown, while preserving the peak PSNR achieved just before instability.}
    \label{fig:frontimage}
\end{figure*}

In this work, we explore what happens when a pretrained denoiser is used inside a PnP framework. Such denoisers are usually trained for single-step denoising and do not have the structural properties needed to guarantee convergence of the sequence \(\{\x_k\}\) produced by~\eqref{eq:pnppgd}. As a result, their behavior can be unpredictable. In the context of image reconstruction, we observe a consistent failure pattern that forms the basis of this study. Specifically, when popular denoisers are plugged into PGD, HQS, or ADMM, the reconstruction quality improves steadily during the early iterations, but then suddenly starts to degrade after a certain point (see~\cref{fig:frontimage}, first row). As illustrated in~\cref{fig:deep_den_instable}, this behavior is seen across various architectures, including CNNs, diffusion models, and transformers.

\subsection{Related Work}

There has been a lot of work on designing convergent PnP models using deep denoisers. Broadly, two main approaches have emerged. In the first approach, a neural network is used to define an explicit regularizer, which is then optimized together with the model-based loss in an iterative manner~\cite{romano2017little,cohen2021has,reehorst2018regularization,cohen_regularization_2021,hurault2022gradient,hurault_proximal_2022,tan2024provably,goujon2023neural,goujon_learning_2024,pourya2025dealing}. Typically, the network is optimized alongside the proximal algorithm to ensure they work well together. For example,  CNN-based Laplacian regularizers are used in \cite{hurault2022gradient,cohen_regularization_2021}, while \cite{goujon2023neural} builds on convex ridge functions. These methods often follow the structure of PGD, where the gradient step of the regularizer is trained to act like a denoiser. Similarly, \cite{pourya2025dealing,goujon_learning_2024} use quadratic or weakly convex regularizers and train denoisers tailored to their specific iterative schemes. However, the denoiser is closely tied to the base algorithm, and it cannot simply be swapped with another pretrained model while still guaranteeing convergence. 

In the second approach, PnP is viewed as a nonlinear dynamical system~\cite{ryu2019plug,zhang2021plug}, and its convergence is studied using ideas from fixed-point theory~\cite{bauschke2019convex}. One way to ensure convergence is to use classical pseudolinear denoisers~\cite{ACK2023-contractivity,arghya-tsp}. Another strategy is to train a parametric family of denoisers~\cite{pesquet_learning_2021,hertrich_convolutional_2021} that are specifically designed to satisfy mathematical properties such as nonexpansivity or proximability~\cite{pesquet_learning_2021,hurault_proximal_2022}. However, these conditions are usually not satisfied by standard pretrained denoisers.

In summary, insisting on convergence guarantees restricts the range of usable denoisers. Therefore, we shift our focus from formal convergence to the more practical objective of maintaining stable PSNR over iterations. This was partly inspired by~\cite{terris2024equivariant}, which showed that making the denoiser invariant to transformations like rotation, reflection, and translation can help stabilize PnP and improve peak reconstruction quality. The resulting method, called \textbf{Equivariant-PnP}, can delay the PSNR drop often seen in standard PnP. However, in many cases, this only postpones the breakdown rather than preventing it altogether (see~\cref{fig:equi1}). While this delay can be helpful for early stopping strategies~\cite{wei2020tuning,effland2020variational}, a more reliable approach would be to remove the instability altogether.

\subsection{Contribution}

As seen in~\cref{fig:frontimage,fig:deep_den_instable}, even an unstable PnP system can give good results if the iterations are stopped when the reconstruction quality is at its best. We refer to this point as the \textbf{peak reconstruction}. However, figuring out exactly when to stop is not easy. A more practical approach would be to design a mechanism that automatically locks the output at this peak point (see~\cref{fig:frontimage}, second row). This would give us a stable system where the reconstruction quality does not degrade with iterations. Specifically, our goal is to develop a framework that achieves the following objectives:

\begin{enumerate}

   \item  The framework should stabilize any generic PnP system, regardless of the choice of proximal algorithm, pretrained denoiser, or the forward model. We use PSNR to measure reconstruction quality and consider a PnP system unstable if the PSNR noticeably drops over iterations (see~\cref{fig:deep_den_instable}). The system is considered stable if the PSNR remains steady or improves, without peaking and dropping sharply (see~\cref{fig:equi-diverge}).
   
    \item The framework should work without knowing the internal details of the black-box system. This is crucial for broad applicability, since the reasons for instability can vary widely and are often hard to identify.

\end{enumerate}

To achieve the above goals, we propose a practical, data-driven stabilization framework inspired by classical viscosity regularization~\cite{attouch_viscosity_1996}. In optimization problems with multiple solutions, viscosity regularization helps select a specific solution by adding a chosen regularizer. Likewise, in fixed-point methods, a contractive viscosity operator can guide the iterates toward a well-behaved fixed point~\cite{xu_viscosity_2004,sabach_first_2017}. However, PnP systems are not always tied to an objective function and may not have fixed points, so classical viscosity methods cannot be directly applied. Instead, we borrow the core idea of viscosity regularization to improve stability and prevent breakdowns. Specifically, we introduce a contractive stabilizing operator and design a mechanism that adaptively adjusts its influence to control instability and avoid PSNR drops.

\subsection{Organization}

In~\cref{sec:background}, we introduce the key concepts and background needed to understand our approach. In~\cref{sec:method}, we explain the motivation behind viscosity regularization and describe our stabilization algorithm. We present extensive experiments and comparisons in~\cref{sec:exps} to show that our method works well across different denoisers and proximal algorithms. Finally, we summarize our findings and discuss key insights and future directions in~\cref{sec:conclusion}.

\begin{figure}[t]
    \centering
    \includegraphics[width=0.75\columnwidth]{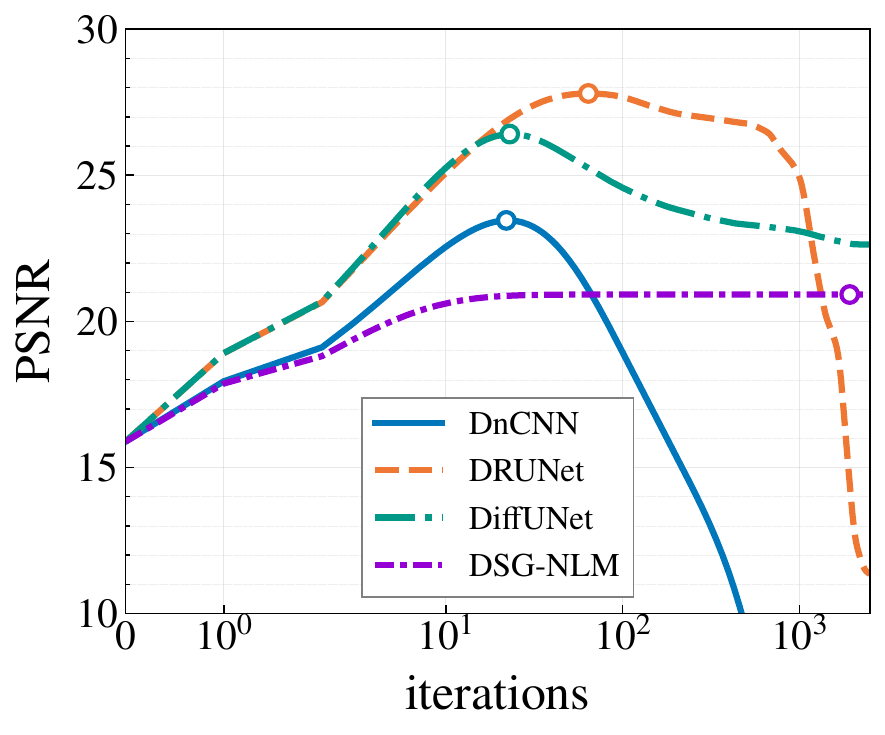}
    \caption{PSNR instability with various deep denoisers plugged into PnP-PGD~\eqref{eq:pnppgd}. Similar patterns of instability are observed with proximal algorithms such as PnP-HQS and PnP-ADMM~(\cref{sec:exps}). The deep denoisers shown here are trained for single-step denoising and therefore lack the structural properties to guarantee a stable PnP system. We also compare the PSNR trend obtained using a classical denoiser from~\cite{sreehari2016plug}, whose corresponding PnP system is known to be provably convergent~\cite{arghya-tsp}. This forms the foundation of our stabilization framework.}
    \label{fig:deep_den_instable}
\end{figure}

\section{Background}
\label{sec:background}

The key question is whether we can develop a convergence theory for the iterative process in~\eqref{eq:vanilla}, where \(\cT\) represents any of the operators in~\eqref{eq:Tpgd},~\eqref{eq:Thqs}, and~\eqref{eq:Tadmm}.  The main challenge is that~\eqref{eq:vanilla} may not correspond to minimizing any well-defined objective function when \(\cD\) is a pretrained denoiser. As a result, standard tools from optimization theory cannot be directly applied to analyze convergence~\cite{bauschke2019convex, beck2017book}. An alternative approach is to use ideas from operator theory~\cite{bauschke_fixed-point_2011}. First, we provide an operator-based description of PnP methods.

\subsection{PnP Operators}

As mentioned earlier, PnP models are built by replacing the proximal operator with a denoiser.  For example, if we make this substitution in the PGD algorithm, we obtain~\eqref{eq:pnppgd}, which we refer to as \textbf{PnP-PGD}. The same idea can be applied to algorithms such as HQS~\cite{geman1995nonlinear} and ADMM~\cite{bauschke2019convex}. We will refer to the corresponding PnP algorithms as \textbf{PnP-HQS} and \textbf{PnP-ADMM}~\cite{zhang2021plug,yuan2020plug,ryu2019plug,chan2016plug}. These algorithms have more detailed formulations~\cite{zhang2021plug,ryu2019plug}, but we only need to understand them as fixed-point operations for our purposes. Specifically, we can write \eqref{eq:pnppgd} as
\begin{equation}
\label{eq:Tpgd}
\x_{k+1}=\Tpgd(\x_k), \quad \Tpgd = \cD \circ \big( \cI-\gamma \nabla \!f  \big),
\end{equation}
where $\cI$ is the identity operator on $\Re^n$ and $\circ$ denotes composition of two operators. The corresponding operators in PnP-HQS and PnP-ADMM are
\begin{equation}
\label{eq:Thqs}
\Thqs = \cD \circ \mathrm{prox}_{\mu  f},
\end{equation}
and
\begin{equation}
\label{eq:Tadmm}
\Tadmm = \frac{1}{2} \left( \cI +  (2\cD - \cI ) \circ (2  \mathrm{prox}_{\alpha f} - \cI ) \right),
\end{equation}
where $f$ is the loss function in~\eqref{eq:fg} and $\mu, \alpha >0$ are tunable parameters. 

We generally refer to~\eqref{eq:Tpgd}, \eqref{eq:Thqs}, and~\eqref{eq:Tadmm}  as \textbf{Vanilla-PnP} operator, denoted by $\cT$. The corresponding iterations 
\begin{equation}
\label{eq:vanilla}
\begin{cases}
\x_0 \in \Re^n, \\
\x_{k+1} = \cT(\x_k), \qquad k \geqslant 0,
\end{cases}
\end{equation}
are referred to as \textbf{Vanilla-PnP}. The above identifications allow us to draw connections with operator theory. 

\subsection{Fixed-Point Convergence}

An operator $\cT: \Re^n \to \Re^n$ is said to be Lipschitz continuous if there exists $\beta > 0$ such that for all $\x, \y \in \Re^n$:
\begin{equation}
\label{eq:Lipschitz}
\|\cT(\x) - \cT(\y)\| \leqslant \beta \|\x - \y\|.
\end{equation}
The operator is nonexpansive if $\beta \leqslant 1$, and contractive (or a contraction) if $\beta < 1$. 
If the smallest possible $\beta$ in~\eqref{eq:Lipschitz} is greater than $1$, then the operator is called expansive.
A nonexpansive operator $\cT$ is said to be averaged (or $\theta$-averaged) if there exists a nonexpansive operator $\cN: \Re^n \to \Re^n$ and $\theta \in (0,1)$ such that $\cT = (1 - \theta)\cN + \theta \cI$.
In this case, the operators $\cT$ and $\cN$ have the same fixed points~\cite{bauschke2019convex}.

A key property of a contractive operator $\cT$ is that it has a unique fixed point $\p \in \Re^n$ such that $\cT(\p) = \p$. Moreover, for any $\x_0 \in \Re^n$, the iterations in~\eqref{eq:vanilla} are guaranteed to converge to $\p$. On the other hand, an averaged operator $\cT$ need not have any fixed points. However, if $\cT$ has a fixed point, the iterations are guaranteed to converge to a fixed point~\cite{bauschke_fixed-point_2011}.

The Vanilla-PnP operator has two components: one that comes from the denoiser, and another based on the loss function. The second part can be made nonexpansive by adjusting the parameters—e.g., the step size in PnP-PGD~\cite{ACK2023-contractivity}. Moreover, if the denoiser $\cD$ is averaged or contractive, fixed-point theory can guarantee convergence~\cite{bauschke2019convex}.  Some Vanilla-PnP operators are designed to be averaged~\cite{nair2024averaged, pesquet_learning_2021}, and can even be contractive~\cite{arghya-tsp}. In such cases, Vanilla-PnP is guaranteed to converge to a fixed point.

The challenge with pretrained denoisers such as DnCNN~\cite{zhang2017beyond}, DRUNet~\cite{zhang2021plug}, and DiffUNet~\cite{choi2021conditioning} is that, although they are Lipschitz continuous, they are usually expansive~\cite{ryu2019plug, hurault2022gradient, nair2024averaged}. In fact, testing whether a deep denoiser is nonexpansive is intractable, since computing the smallest $\beta$ in~\eqref{eq:Lipschitz} is known to be NP-hard for neural networks~\cite{virmaux2018lipschitz}. This makes it hard to train nonexpansive deep denoisers—it requires enforcing structural properties that are difficult even to test.

\begin{figure*}[thbp]
    \centering
    \subfloat{
    \includegraphics[width=0.49\linewidth]{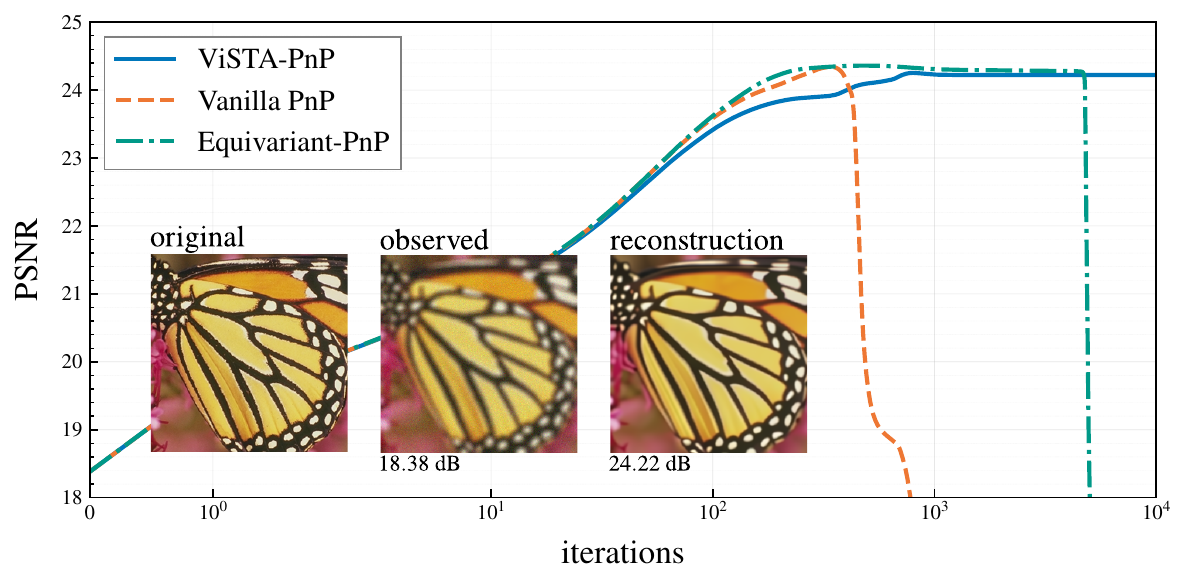}\label{fig:equi1}} \hfill
    \subfloat{
    \includegraphics[width=0.49\linewidth]{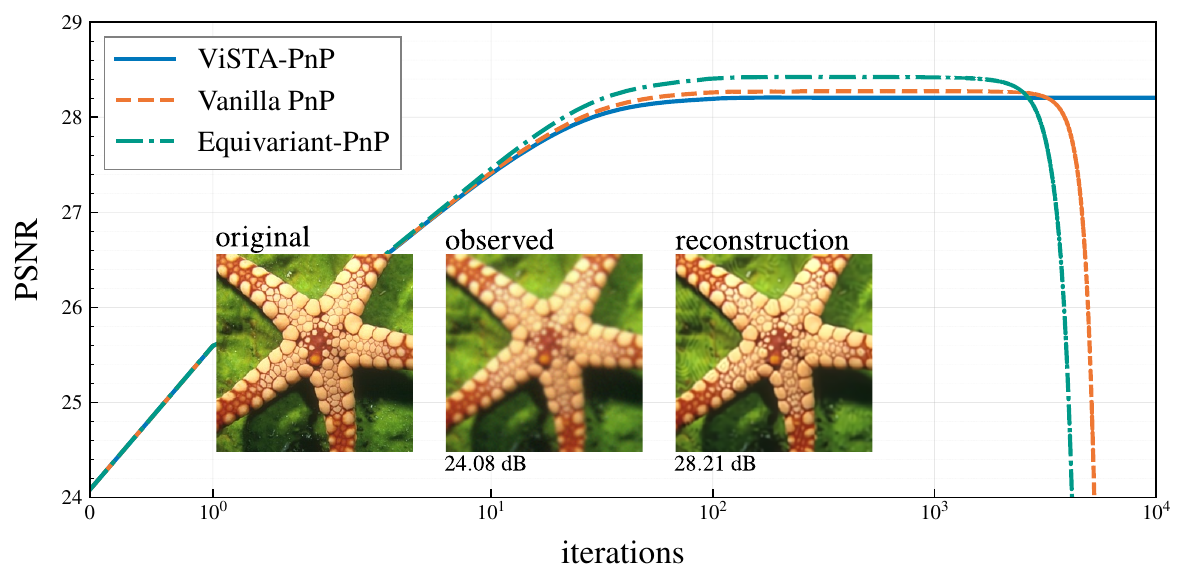}\label{fig:equi2}}
    \caption{Comparison of Vanilla-PnP, Equivariant-PnP, and ViSTA-PnP for Gaussian deblurring ($9\times 9$ Gaussian blur, standard deviation $4$, additive noise $0.03$) (\textbf{left}) and $2\times$ superresolution (\textbf{right}). We use PnP-PGD+DRUNet for both tasks, with test images drawn from the set3c dataset. Vanilla-PnP shows early divergence in the deblurring task, while Equivariant-PnP~\cite{terris2024equivariant} delays the breakdown but ultimately fails. In the superresolution case, both Vanilla-PnP and Equivariant-PnP exhibit early divergence. In contrast, ViSTA-PnP stabilizes the iterations effectively in both settings, without significantly compromising reconstruction quality. The peak PSNR is within $0.1$ dB (resp.~$0.2$ dB) of Vanilla-PnP (resp.~Equivariant-PnP).}
    \label{fig:equi-diverge}
\end{figure*}

\section{Method}
\label{sec:method}

\subsection{Viscosity Regularization}

The idea behind our approach comes from operator averaging. As noted earlier, if $\cT$ is nonexpansive, then it has the same fixed points as the averaged operator $(1 - \theta)\cT + \theta \cI$, where $\theta \in (0,1)$. Moreover, this averaged operator can be used to stably compute a fixed point of $\cT$. In fact, the fixed-point iterations~\eqref{eq:vanilla} may not converge when applied directly to $\cT$, but they can become stable when $\cT$ is replaced with its averaged version.

The challenge is that the Vanilla-PnP operator is expansive in most PnP systems that use deep denoisers~\cite{cohen2021has,terris2024equivariant}. This leads to a natural question: instead of averaging $\cT$ with the identity operator, could we benefit by averaging it with a stronger operator?  A promising idea is to use a contraction operator $\cS$, derived from a classical reconstruction algorithm~\cite{arghya-tsp}.

The motivation for choosing a contraction stems from the denoiser $\cD$ being Lipschitz for most neural network architectures~\cite{gouk2021regularisation}. Since the Vanilla-PnP operator $\cT$  involves compositions of $\cD$ with linear operators, $\cT$ is also Lipschitz.  In this regard, we have the following simple observation.

\begin{proposition}
\label{prop:averagingTS}
Suppose $\cT$ is a Lipschitz operator and $\cS$ is a contraction. Then, there exists $\theta_0 > 0$ such that the operator $(1 - \theta)\cT + \theta \cS$ is contractive for all $\theta \in (\theta_0, 1]$. In particular, the iterations
\begin{equation}
\label{eq:contra-avg}
\begin{cases}
\x_0 \in \Re^n, \\
\x_{k+1} =  (1 - \theta)\cT(\x_k) + \theta \cS(\x_k)  \qquad (k \geqslant 0),
\end{cases}
\end{equation}
are convergent for all $x_0 \in \Re^n$.
\end{proposition}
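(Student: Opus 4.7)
The plan is to establish a Lipschitz bound for the convex combination $(1-\theta)\cT + \theta \cS$ as an explicit affine function of $\theta$, identify the range where that bound drops below one, and then invoke the Banach fixed-point theorem on $\Re^n$.

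Let $\beta \geqslant 0$ be the Lipschitz constant of $\cT$ (which may exceed one in our setting since $\cT$ is assumed only Lipschitz) and let $\kappa \in [0,1)$ be the contraction constant of $\cS$. For arbitrary $\x, \y \in \Re^n$, I would apply the triangle inequality to the difference of the two combinations and then use the Lipschitz bounds on each component separately to obtain
\begin{equation*}
\big\| \big((1-\theta)\cT + \theta \cS\big)(\x) - \big((1-\theta)\cT + \theta \cS\big)(\y) \big\| \;\leqslant\; L(\theta)\, \|\x - \y\|,
\end{equation*}
where $L(\theta) := (1-\theta)\beta + \theta\kappa = \beta + \theta(\kappa - \beta)$.

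Next I would analyze the scalar function $L(\theta)$, which is affine in $\theta$ and satisfies $L(1) = \kappa < 1$. If $\beta < 1$ already, any $\theta \in [0,1]$ works and we may take $\theta_0 = 0$; otherwise $\beta - \kappa > 0$, and solving $L(\theta) < 1$ gives $\theta > (\beta-1)/(\beta-\kappa)$. Setting $\theta_0 := \max\{0,\, (\beta-1)/(\beta-\kappa)\}$ then yields $\theta_0 \in [0,1)$, and $(1-\theta)\cT + \theta\cS$ is contractive for every $\theta \in (\theta_0, 1]$.

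Finally, since $\Re^n$ with the Euclidean norm is complete, Banach's fixed-point theorem applied to the contraction $(1-\theta)\cT + \theta\cS$ guarantees a unique fixed point and convergence of the iterations~\eqref{eq:contra-avg} from any $\x_0 \in \Re^n$. There is no real obstacle here: the argument reduces to a one-line triangle-inequality estimate plus a scalar inequality. The only subtlety worth explicitly noting is that $\theta_0 < 1$ always holds, regardless of how large the Lipschitz constant $\beta$ of $\cT$ may be, because $\kappa < 1$; this is precisely what makes the contractive viscosity operator $\cS$ useful even when $\cT$ is strongly expansive.
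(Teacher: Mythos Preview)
Your argument is correct. The paper does not actually supply a proof of \cref{prop:averagingTS}; it is introduced as a ``simple observation'' and left unproven, with the only displayed proof in the paper reserved for \cref{prop:bound}. Your triangle-inequality bound $L(\theta)=(1-\theta)\beta+\theta\kappa$ and the resulting threshold $\theta_0=\max\{0,(\beta-1)/(\beta-\kappa)\}$ are exactly the computation the authors evidently have in mind---indeed, the same expression $(\eta-1)/(\eta-\beta)$ reappears later in \cref{prop:bound} and in the adaptive rule~\eqref{thetak}---so your write-up is consistent with the paper and fills in the omitted details.
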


It is important to note that \cref{prop:averagingTS} does not necessarily hold when $\cS$ is merely nonexpansive rather than contractive. The challenges in directly applying~\eqref{eq:contra-avg} are:
\begin{enumerate}
\item[(i)] For PnP systems where $\cD$ is a neural network, computing the smallest possible $\beta$ in~\eqref{eq:Lipschitz} is generally intractable. Moreover, the estimated values of $\beta$ for deep networks are too loose to be useful in practice~\cite{virmaux2018lipschitz}. In short, we cannot reliably set $\theta_0$ in~\cref{prop:averagingTS}.

\item[(ii)] To ensure that $(1 - \theta)\cT + \theta \cS$ is contractive, $\theta$ needs to be set close to $1$. However, this causes the classical operator $\cS$ to dominate the reconstruction process, which reduces the impact of the more powerful operator $\cT$.
\end{enumerate}

Despite the above concerns, \cref{prop:averagingTS} serves as a helpful starting point, and we base our stabilization mechanism on it.
While it is inherently difficult to address point~(i), point~(ii) can be tackled by gradually reducing the influence of the contraction operator $\cS$ over the iterations. This can be done using the update rule
\begin{equation}
\label{eq:viscous_update}
\x_{k+1} = (1 - \theta_k) \cT(\x_k) + \theta_k \cS(\x_k),
\end{equation}
where the sequence $\{\theta_k\} \to 0$. The scheme~\eqref{eq:viscous_update} is known as the viscosity regularization of $\cT$, and the operator $\cS$ is referred to as the viscosity operator~\cite{xu_viscosity_2004,attouch_viscosity_1996}. We will refer to $\theta_k$ as the viscosity index.

Since $\theta_k$ changes with each iteration, the convergence of~\eqref{eq:viscous_update} is not easy to guarantee because standard results from the theory of averaged operators~\cite {bauschke2019convex} do not apply. Convergence in this setting was established in~\cite[Theorem 3.2]{xu_viscosity_2004}, and we present a simplified version of that result below.

\begin{theorem}
    \label{thm:viscosity}
    Suppose $\cT$ is nonexpansive and has fixed points, $\cS$ is contractive, and $\theta_k = 1/k$ in~\eqref{eq:viscous_update}. Then, for any $\x_0 \in \Re^n$, the iterates $\{\x_k\}$ in~\eqref{eq:viscous_update} converge to a fixed point of $\cT$.
\end{theorem}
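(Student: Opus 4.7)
The plan is to follow the classical template for viscosity approximation of nonexpansive operators, adapted to the Euclidean setting of this paper. First I would introduce the auxiliary contraction $\cT_\theta = (1-\theta)\cT + \theta \cS$ for each $\theta \in (0,1]$. If $\rho \in [0,1)$ is the contraction constant of $\cS$, then a direct estimate using the triangle inequality shows that $\cT_\theta$ is a contraction with modulus $1 - \theta(1-\rho)$, so by the Banach fixed-point theorem it admits a unique fixed point $\p_\theta \in \Re^n$, characterized by
\begin{equation*}
\p_\theta = (1-\theta)\cT(\p_\theta) + \theta \cS(\p_\theta).
\end{equation*}
A standard computation, using any fixed point $\p^\star$ of $\cT$ together with $\cS(\p^\star) - \p^\star$, yields a uniform bound of the form $\|\p_\theta - \p^\star\| \le \|\cS(\p^\star) - \p^\star\|/(1-\rho)$, which shows that the viscosity path $\{\p_\theta : \theta \in (0,1]\}$ is bounded.

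The second step is to establish that $\p_\theta$ converges as $\theta \to 0^+$ to a distinguished fixed point $\p^\star$ of $\cT$, specifically the one characterized by the variational inequality $\langle (\cI - \cS)\p^\star,\, \p^\star - \p\rangle \le 0$ for every fixed point $\p$ of $\cT$. This is the Browder-type limit for nonexpansive mappings, and I would prove it by (i) extracting a convergent subsequence $\p_{\theta_{k_j}} \to \bar{\p}$ using boundedness in $\Re^n$, (ii) showing that $(\cI - \cT)\p_\theta \to 0$ from the fixed-point equation, which forces $\bar{\p}$ to be a fixed point of $\cT$ by continuity, and (iii) using the identity $\p_\theta - \p = (1-\theta)(\cT \p_\theta - \p) + \theta(\cS \p_\theta - \p)$ with the parallelogram/cosine expansion in Euclidean space to derive the variational inequality in the limit. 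Uniqueness of the solution of this variational inequality, which follows from $\cS$ being a strict contraction on the (convex) fixed-point set of $\cT$, then implies that the whole path converges to $\p^\star$.

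The third and decisive step is to show that the iterates track the viscosity path, i.e., $\|\x_k - \p_{\theta_k}\| \to 0$. Using the defining recursions for $\x_{k+1}$ and $\p_{\theta_k}$, expanding the squared norm, and exploiting nonexpansiveness of $\cT$ and contractivity of $\cS$, I would derive an inequality of the form
\begin{equation*}
\|\x_{k+1} - \p_{\theta_k}\|^2 \le \bigl(1 - 2\theta_k(1-\rho) + \theta_k^2 C_1\bigr)\|\x_k - \p_{\theta_k}\|^2 + \theta_k \, \varepsilon_k,
\end{equation*}
where $\varepsilon_k \to 0$ captures cross terms that vanish because $\cT$ is nonexpansive and $\{\x_k\}, \{\p_{\theta_k}\}$ stay bounded. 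Setting $a_k = \|\x_k - \p_{\theta_k}\|^2$ and absorbing the $\|\p_{\theta_{k-1}} - \p_{\theta_k}\|$ term (which is controlled because the path is continuous in $\theta$ and $\theta_k - \theta_{k+1} = O(1/k^2)$), the choice $\theta_k = 1/k$ gives $\sum \theta_k = \infty$ and $\sum \theta_k^2 < \infty$, putting the recurrence in the form $a_{k+1} \le (1 - \alpha_k) a_k + \alpha_k \delta_k$ with $\alpha_k \to 0$, $\sum \alpha_k = \infty$, and $\delta_k \to 0$. An application of the standard Robbins--Siegmund / Xu quasi-Fej\'er lemma then gives $a_k \to 0$. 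Combining with $\p_{\theta_k} \to \p^\star$ from step two yields $\x_k \to \p^\star$.

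I expect the main obstacle to be step three: closing the recursion requires carefully controlling both the drift $\|\p_{\theta_{k+1}} - \p_{\theta_k}\|$ of the viscosity path and the cross term $\langle \cT(\x_k) - \cT(\p_{\theta_k}),\, \cS(\p_{\theta_k}) - \p_{\theta_k}\rangle$, and then invoking a summability lemma with the sharp hypotheses matched to the harmonic schedule $\theta_k = 1/k$. Step two is also delicate because the variational-inequality characterization requires passing an inner-product inequality to the limit along a subsequence and then bootstrapping uniqueness to full convergence; in the original reference this is done in a general Hilbert-space setting, but for $\Re^n$ the arguments simplify since bounded sequences have convergent subsequences without having to appeal to weak compactness or the demiclosedness principle.
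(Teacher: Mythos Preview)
The paper does not supply its own proof of this theorem: it is stated without proof as ``a simplified version'' of \cite[Theorem~3.2]{xu_viscosity_2004}, and the subsequent text immediately moves on to say that the result cannot be directly applied because $\cT$ is typically expansive. So there is no in-paper argument to compare against; the relevant benchmark is Xu's original proof.

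Your outline is precisely the classical Xu template: (i) form the one-parameter family of contractions $(1-\theta)\cT+\theta\cS$ and invoke Banach to get the viscosity path $\p_\theta$; (ii) show $\p_\theta\to\p^\star$ via the variational-inequality characterization on $\mathrm{Fix}(\cT)$; (iii) show the iterates track the path using the summability lemma, with the harmonic schedule $\theta_k=1/k$ satisfying $\theta_k\to 0$, $\sum\theta_k=\infty$, and $\sum|\theta_{k+1}-\theta_k|<\infty$. This is exactly the route taken in the cited reference, and your remarks about simplifications in $\Re^n$ (no weak compactness or demiclosedness needed) are accurate. In short, the proposal is correct and coincides with the approach the paper defers to.
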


Importantly, the limit point is determined by the viscosity operator $\cS$. In our context, this has a natural interpretation: the viscosity operator controls the quality of the reconstruction.

Unfortunately, we cannot directly apply~\cref{thm:viscosity}, as $\cT$ is typically expansive in PnP systems. Therefore, we use~\cref{thm:viscosity} as a guiding principle to promote stability rather than seeking a formal convergence guarantee. In particular, rather than letting $\theta_k \to 0$, we ensure that $\theta_k$ does not become too small. Indeed, if $\cS$ is completely phased out, the PnP iterations may again become unstable. To address this, we propose a data-driven strategy that adaptively sets the viscosity index while keeping it bounded away from zero.

\begin{figure*}
\centering
    \subfloat[instability with $\cS=\cI$]{
    \label{fig:theta-div1}
    \includegraphics[width=0.38\linewidth]{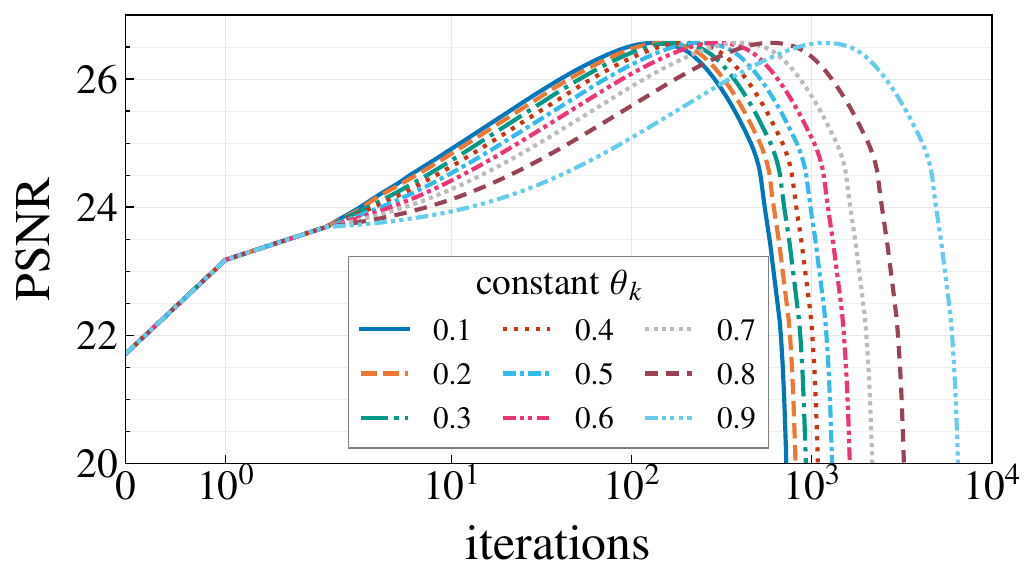}}
    \subfloat[$\cS=0.95\, \cI$ vs $\cS_{\mathrm{NLM}}$]{
    \label{fig:theta-div2}
    \includegraphics[width = 0.38\linewidth]{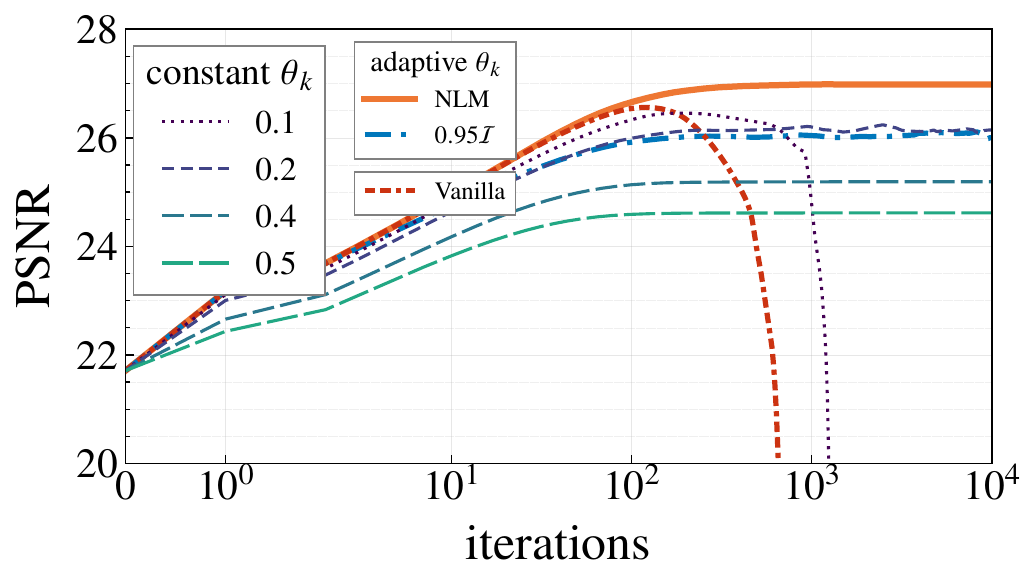}
    }
    \subfloat[evolution of adaptive $\theta_k$]{
    \label{fig:theta-div3}
    \includegraphics[width = 0.21\linewidth]{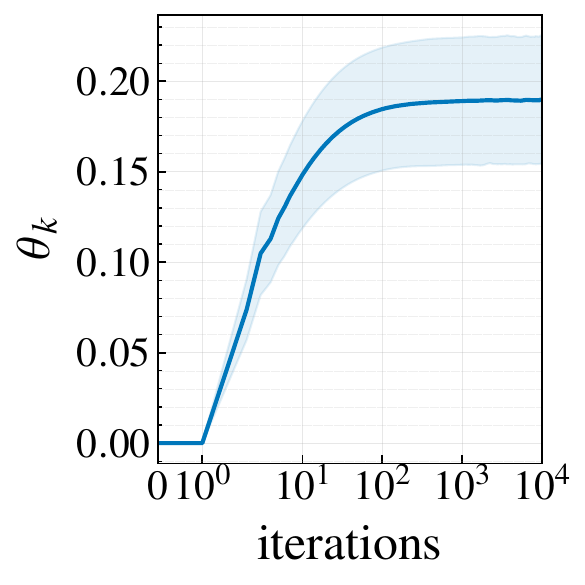}
    }
    \caption{Gaussian deblurring using the PnP-PGD+DRUNet combination. This experiment highlights the importance of choosing a contractive viscosity operator $\cS$. When we use $\cS = \cI$ (\textbf{left}), which is nonexpansive but not contractive, the iterations remain unstable, even with a fixed value of $\theta_k$. In contrast, using $\cS = 0.95, \cI$ makes the system stable for a fixed $\theta_k > 0.1$ (\textbf{center}), and also under the adaptive rule~\eqref{thetak} (\textbf{right}). As expected, increasing the viscosity index improves stability, though it may cause a slight drop in PSNR. Using the more effective contractive operator $\cS_{\mathrm{NLM}}$ in~\eqref{eq:NLM} gives better PSNR than $\cS = 0.95, \cI$. Under the adaptive rule~\eqref{thetak}, the viscosity index $\theta_k$ evolves and settles around $0.2$ on average across the set3c dataset. This strikes a good balance between the reconstruction operator $\cT$ and the stabilizing effect of $\cS$.}
    \label{fig:theta-div}
\end{figure*}

\subsection{Data-Driven Stabilization}
\label{sec:adaptive}

We describe a practical algorithm for selecting $\theta_k$ and discuss possible choices for the viscosity operator $\cS$.
Intuitively, the role of $\cS$ is to suppress small artifacts in the early stages of reconstruction, preventing them from growing and causing instability. However, if the viscosity index $\theta_k$ is set too high, it can overly stabilize the process too early, ultimately degrading the quality of the final reconstruction (see~\cref{fig:theta-div2}).

A practical strategy is to adaptively adjust $\theta_k$ based on the behavior of the past iterates, with the goal of keeping the sequence $\{\x_k\}$ bounded and preventing divergence. Since the viscosity operator $\cS$ is contractive, it admits a unique fixed point $\p$. Notably, $\cS$ can be derived from classical reconstruction methods, and its fixed point has a clear and interpretable meaning in practical applications. 

Since $\p$ is typically a good reconstruction, we can use this as a reference point to assess the stability of the iterates. To make this precise, we introduce the notion of an $\eta$-stable operator. Specifically, we say that $\cT$ is $\eta$-stable with respect to $\p \in \Re^n$ if there exists $\eta > 0$ such that
\begin{equation}
\label{eq:eta-stable}
\|\cT(\x) - \p\| \leqslant \eta \|\x - \p\| \qquad (\x \in \Re^n).
\end{equation}
The parameter $\eta$ quantifies how much the operator $\cT$ expands or contracts with respect to the fixed point $\p$. We note that this is distinct from quasi-nonexpansiveness~\cite{cohen_regularization_2021}.

The following observation and the subsequent discussion provide a guideline for adjusting the viscosity index.

\begin{proposition}
\label{prop:bound}
Let $\cS$ be a $\beta$-contraction with fixed point $\p$, and let $\cT$ be $\eta$-stable with respect to $\p$. If a constant viscosity $\theta_k = \theta$ is used, where $(\eta - 1)/(\eta - \beta) \leqslant \theta <1$ for $\eta > 1$, and $\theta = 0$ for $\eta \leqslant 1$, then the iterates $\{\x_k\}$ in~\eqref{eq:viscous_update} are bounded.
\end{proposition}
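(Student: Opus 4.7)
The plan is to control $\|\x_{k+1} - \p\|$ by applying the triangle inequality to the convex-combination update and then bounding the two resulting terms separately using the given hypotheses. Since $\p$ is the fixed point of $\cS$, I can write $\cS(\x_k) - \p = \cS(\x_k) - \cS(\p)$ and invoke contractivity, while the $\eta$-stability of $\cT$ controls the other term. Concretely,
\begin{align*}
\|\x_{k+1} - \p\| &= \bigl\|(1-\theta)\bigl(\cT(\x_k) - \p\bigr) + \theta\bigl(\cS(\x_k) - \cS(\p)\bigr)\bigr\| \\
&\leqslant (1-\theta)\,\|\cT(\x_k) - \p\| + \theta\,\|\cS(\x_k) - \cS(\p)\| \\
&\leqslant \bigl((1-\theta)\eta + \theta\beta\bigr)\,\|\x_k - \p\|.
\end{align*}
This reduces the problem to ensuring that the scalar factor $L(\theta) := (1-\theta)\eta + \theta\beta = \eta - \theta(\eta - \beta)$ satisfies $L(\theta) \leqslant 1$.

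The second step is to solve $L(\theta) \leqslant 1$. Because $\cS$ is contractive we have $\beta < 1$, so when $\eta > 1$ the quantity $\eta - \beta$ is strictly positive and the inequality rearranges to $\theta \geqslant (\eta - 1)/(\eta - \beta)$, which is exactly the hypothesized lower bound. The same inequality $\beta < 1$ also yields $(\eta - 1)/(\eta - \beta) < 1$, so the admissible interval $[(\eta - 1)/(\eta - \beta), 1)$ is nonempty and the statement is well posed. Iterating the one-step estimate then gives $\|\x_k - \p\| \leqslant \|\x_0 - \p\|$ for every $k \geqslant 0$, and boundedness of $\{\x_k\}$ follows from the triangle inequality $\|\x_k\| \leqslant \|\p\| + \|\x_0 - \p\|$.

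The case $\eta \leqslant 1$ is even simpler: with $\theta = 0$ the update reduces to $\x_{k+1} = \cT(\x_k)$, and the $\eta$-stability bound directly gives $\|\x_{k+1} - \p\| \leqslant \eta\,\|\x_k - \p\| \leqslant \|\x_k - \p\|$. I do not expect a genuine obstacle in this argument, since it is essentially a one-line triangle-inequality computation followed by solving a linear inequality in $\theta$; the only subtle points worth flagging are the sign of $\eta - \beta$ (which relies on $\beta < 1$) and the nonemptiness of the admissible range for $\theta$. Neither requires any machinery beyond the definitions of contraction and $\eta$-stability stated in the paper.
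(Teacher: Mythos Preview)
Your argument is correct and matches the paper's own proof essentially line for line: both rewrite $\x_{k+1}-\p$ using $\cS(\p)=\p$, apply the triangle inequality together with $\eta$-stability and $\beta$-contractivity to obtain the factor $(1-\theta)\eta+\theta\beta$, and then verify this factor is at most $1$ under the stated condition on $\theta$. Your explicit handling of the sign of $\eta-\beta$ and of the $\eta\leqslant 1$ case is exactly what the paper does as well.
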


\begin{proof}
Note that the condition in~\cref{prop:bound} guarantees that $\theta \in [0,1]$. This is immediate when $\eta \leqslant 1$. For $\eta > 1$, we have $(\eta - 1)/(\eta - \beta) \in (0,1)$ since $\beta < 1 < \eta$. In particular, $(1-\theta)\eta + \theta \beta \leqslant 1$.  Since $\cS(\p)=\p$, we can write
    \begin{align*}
    \x_{k+1} - \p &=  (1-\theta) \left(\cT (\x_k) - \p \right) + \theta \left(\cS (\x_k) - \p\right) \\
    &= (1-\theta) \left(\cT (\x_k) - \p \right)  + \theta \big(\cS (\x_k) - \cS(\p)\big).
    \end{align*}
As $\cT$ is $\eta$-stable with respect to $\p$, $\cS$ is a $\beta$-contraction, and $\theta \in [0,1]$, it follows that
\begin{align*}
\|\x_{k+1} - \p\| & \leqslant  (1-\theta) \| \cT (\x_k) - \p\| +  \theta \| \cS(\x_k) -\cS(\p) \| \\
&\leqslant  \left( (1-\theta)\eta + \theta \beta \right) \|\x_k - \p\| \\
&\leqslant  \|\x_k - \p\|.
\end{align*}
By iterating the above bound, we obtain $\| \x_k - \p\| \leqslant \| \x_0 - \p\|$ for all $k \geqslant 1$, which establishes that  $ \{\x_k\} $ is bounded.
\end{proof}

As in~\cref{prop:averagingTS}, the contractivity of $\cS$  plays a crucial role in~\cref{prop:bound}. Of course, boundedness alone does not guarantee convergence or PSNR stability. Nonetheless, it is still a useful guarantee, especially considering the divergence of the iterates shown in~\cref{fig:equi-diverge}. In particular, ensuring boundedness helps prevent uncontrolled growth in the iterates, serving as a basic but important form of stabilization.

It is difficult to verify~\eqref{eq:eta-stable} in practice. Instead, motivated by~\cref{prop:bound}, we propose the following heuristic. At each iteration $k \geqslant 1$, we compute
\begin{equation}
    \label{etak}
    \eta_k = \frac{\|\cT(\x_k) - \p\|}{\|\x_k - \p\|} \quad \mbox{and} \quad \beta_k = \frac{\|\cS(\x_k) - \p \|}{ \|\x_k - \p\|},
\end{equation}
and set $\theta_k = (\eta_k - 1)/(\eta_k - \beta_k)$. Since $\beta_k \leqslant \beta < 1$, it follows that $\theta_k \in [0,1)$, as required by~\eqref{eq:viscous_update}. If $\eta_k \leqslant 1$, then by~\cref{prop:bound}, we set $\theta_k = 0$.

We also make the following adjustments. When $\theta_k$ approaches $1$, the influence of the Vanilla-PnP operator is reduced. To prevent this, we impose an upper bound $\Theta \in (0,1)$ and define
\begin{equation}
\label{thetak}
\theta_k = \min \left( \frac{\eta_k - 1}{\eta_k - \beta_k}, \Theta \right).
\end{equation}
Moreover, since~\eqref{etak} is undefined when $\x_k = \p$, we set $\theta_k = \Theta$ whenever $\x_k$ lies within a small neighborhood of $\p$.

A natural question is whether it is necessary for $\cS$ to be a contraction. In~\cref{fig:theta-div1}, we present an example showing that even when $\cS$ is nonexpansive, it may fail to prevent Vanilla-PnP from diverging if it is not strictly contractive.

The proposed stabilization framework, \textbf{Viscosity Stabilized PnP}~(ViSTA-PnP), is summarized in~\cref{algo:vista}.

\begin{algorithm}[H]
    \caption{ViSTA-PnP}
    \label{algo:vista}
    \begin{algorithmic}[1]
        \REQUIRE Vanilla-PnP $\cT$, contraction $\cS$, $\x_0$, and $\Theta$.
        \STATE compute the fixed point $\p$ of $\cS$.
        \FOR{$k = 0,1,\ldots$}
        \STATE set $\eta_k$ and $\beta_k$ using $\p, \cT\text{ and } \cS$ in~\eqref{etak}.
        \STATE set $\theta_k$ using $\Theta, \eta_k$ and $\beta_k$ in~\eqref{thetak}.
        \STATE update $\x_{k+1} = (1-\theta_k) \, \cT(\x_k) + \theta_k \, \cS (\x_k) $.
        \ENDFOR
    \end{algorithmic}
\end{algorithm}

\subsection{Viscosity Operator}
\label{sec:visc-op}

\begin{figure*}
    \centering
    \subfloat[original]{
    \includegraphics[width=0.31\columnwidth]{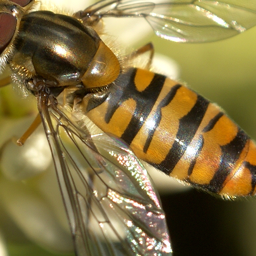}}\hfill
     \subfloat[observed]{
    \includegraphics[width=0.31\columnwidth]{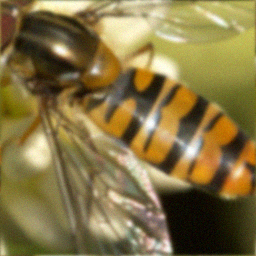}}\hfill
    \subfloat[Vanilla-PnP]{
    \includegraphics[width=0.31\columnwidth]{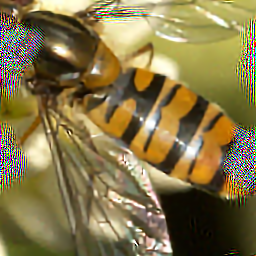}}\hfill
    \subfloat[ViSTA ($\,\cS = 0.95 \cI\,$)]{
    \includegraphics[width=0.31\columnwidth]{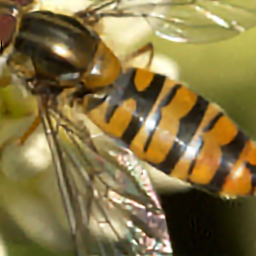}}\hfill
     \subfloat[ViSTA ($\cS_{\mathrm{NLM}}$)]{
    \includegraphics[width=0.31\columnwidth]{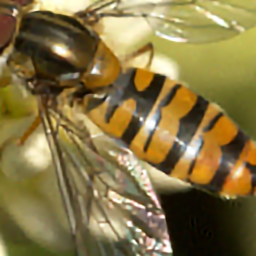}}\hfill
     \subfloat[MMO]{
    \includegraphics[width=0.31\columnwidth]{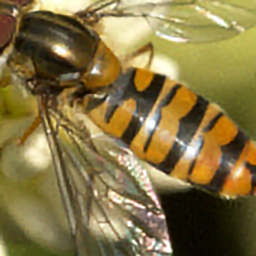}}
    \caption{Deblurring experiment on the \textit{wasp} image from the General100 dataset~\cite{general100}, using a $25 \times 25$ Gaussian blur (standard deviation $1.6$) and Gaussian noise of standard deviation $0.01$. We use the PnP-PGD+DnCNN  framework. We compare the performance of two viscosity operators, $\cS = 0.95 \cI$ and $\cS = \cS_{\mathrm{NLM}}$, within the ViSTA-PnP setup, against the baseline Vanilla-PnP. Additionally, we include results obtained using the Lip-DnCNN denoiser trained under the MMO framework~\cite{pesquet_learning_2021}. The PSNR values are: (b) $24.46$, (c) $-9.55$, (d) $28.81$, (e) $28.91$, and (f) $28.87$.}
    \label{fig:different_S}
\end{figure*}

As illustrated in~\cref{fig:different_S}, even the trivial contraction $\cS = \beta \cI$ with $\beta \in [0,1)$, can stabilize Vanilla-PnP. In this case, the fixed point $\p =0$. Although this may be effective in some situations, it often leads to poor reconstructions when used with powerful denoisers such as DRUNet~\cite{zhang2021plug}, DiffUNet~\cite{choi2021conditioning}, and Restormer~\cite{zamir2022restormer}.

A more promising strategy is to use a classical reconstruction operator $\cS$ that is naturally contractive and whose fixed point $\p$ is a high-quality image.
Surprisingly, such operators are rare in the literature. We propose using the PnP-PGD operator associated with the Non-Local Means (NLM) denoiser~\cite{buades2005non}, defined as
\begin{equation}
\label{eq:NLM}
\cS_{\mathrm{NLM}} = \cD_{\mathrm{NLM}} \circ (\cI - \rho \nabla \! f),
\end{equation}
where $\cD_{\mathrm{NLM}}$ is a proximable variant~\cite{sreehari2016plug} of the NLM denoiser. It was shown in~\cite[Theorem~1]{ACK2023-contractivity} that $\cS_{\mathrm{NLM}}$ is contractive for deblurring and superresolution tasks if $\rho \in (0,2)$.

The operator \(\cS_{\mathrm{NLM}}\) is known to give high-quality reconstruction~\cite{sreehari2016plug,ACK2023-contractivity}. To compute~\eqref{etak}, we need the fixed point of $\cS$, but a rough estimate is sufficient in practice. This can be done by applying $\cS$ a few times. The next sections show that $\cS_{\mathrm{NLM}}$ provides stability and strong reconstruction performance.

The naive NLM denoiser and its proximable variant $\cD_{\mathrm{NLM}}$ are computationally expensive, relying on nested loops over pixels, channels, and patches. To address this, we developed a PyTorch implementation that removes these loops by vectorizing the computation. Specifically, instead of serially processing the image pixels, we use GPU-friendly tensor operations (\texttt{unfold}, \texttt{roll}, and \texttt{batched shifts}) to process entire image windows at once on the GPU. This reduces the time complexity from $\cO(W^2 P^2 C N M)$ to $\cO(W^2)$, where $W$, $P$, $C$, $N$, and $M$ are respectively the search window size, patch size, number of channels, image height, and image width. The tradeoff, however, is higher memory usage: the VRAM requirement increases from $\cO(W^2)$ to $\cO(W^2 P^2 C N M)$ due to the need to store large intermediate tensors.

\section{Experiments}
\label{sec:exps}

We emphasize that ViSTA is not intended to improve reconstruction quality or outperform state-of-the-art methods. Instead, the goal is to mitigate the instability in PnP systems. We test ViSTA-PnP~(\cref{algo:vista}) on different reconstruction tasks, proximal algorithms, and denoisers to show that it works well in many situations and is a useful tool for improving stability. We use the notation A+D to represent the combination of a proximal algorithm A and a denoiser D (for example, PnP-PGD+DnCNN or PnP-HQS+DiffUNet). One of the main goals of the experiments is to check whether ViSTA-PnP can match or even improve upon the best performance of Vanilla-PnP.

\subsection{Experimental Setup}

We focus on two key reconstruction tasks: deblurring and superresolution~\cite{zhang2021plug,guo_mambair_2025,terris2024equivariant}. For motion deblurring, we use the blur kernels from~\cite{levin_kernel_2009}, while for Gaussian deblurring, we apply a $25 \times 25$ Gaussian kernel with standard deviation $1.6$. We use Gaussian blurring followed by either $2\times$ (or $4\times$) downsampling for superresolution. In all experiments, the noise term $\boldsymbol{\epsilon}$ in~\eqref{eq:fm} is modeled as Gaussian noise, with standard deviation in the $[0, 0.03]$ range.

The test images are from set3c, CBSD10~\cite{bsd500} urban100~\cite{urban100}, and General100~\cite{general100}. We use PnP-PGD, PnP-HQS, and PnP-ADMM as reconstruction algorithms.
As backbone denoisers, we use pretrained models of DnCNN~\cite{zhang2017beyond}, DRUNet~\cite{zhang2021plug}, DiffUNet~\cite{choi2021conditioning}, GSDRUNet~\cite{hurault2022gradient}, and MMO~\cite{pesquet_learning_2021} from the DeepInverse library~\cite{tachella2023deepinverse}. All experiments are performed on a single NVIDIA RTX A6000 GPU.

For the initialization $\x_0$ in~\cref{algo:vista}, we use the observed image for deblurring and its bicubic interpolation for superresolution. We primarily compare the stability and performance of ViSTA-PnP with Vanilla-PnP, Equivariant-PnP~\cite{terris2024equivariant}, and also with state-of-the-art methods: DPIR \cite{zhang2021plug}, DiffPIR~\cite{zhu_denoising_2023}, and GSPnP~\cite{hurault2022gradient}.

In~$\cS_{\mathrm{NLM}}$ we set the step size to $\rho = 1.9$ and the window size, patch size, and filtering parameter to $3, 3$, and $60/255$.  For Equivariant-PnP, we either average over the dihedral group $D_4$ (reflections and 90-degree rotations) or sample from it~\cite{terris2024equivariant}. For DPIR and GSPnP, we use the default parameters provided in~\cite{zhang2021plug} and ~\cite{hurault2022gradient}, respectively. 

We use \textbf{peak PSNR} for the highest PSNR over all iterations, and \textbf{asymptotic PSNR} for the PSNR after $10K$ iterations.

\begin{table*}[htpb]
\centering
\small
\begin{tabular}{lccccccc}
\toprule
\multirow{3}{*}{\textbf{Framwork}} & \multirow{3}{*}{\textbf{Method}} & \multicolumn{4}{c}{\textbf{Deblur}} & \multicolumn{2}{c}{\textbf{Superresolution}} \\
 &  & \multicolumn{2}{c}{\textbf{Gaussian}} & \multicolumn{2}{c}{\textbf{Motion} (Kernel 3~\cite{levin_kernel_2009})} & \multicolumn{2}{c}{$\mathbf{2\times}$} \\
 &  & \textbf{Peak} & \textbf{Asymtotic} & \textbf{Peak} & \textbf{Asymtotic} & \textbf{Peak} & \textbf{Asymtotic} \\
\midrule
\multicolumn{2}{c}{\textbf{Observed}} & \multicolumn{2}{c}{$24.15\pm3.19$} & \multicolumn{2}{c}{$21.86\pm3.37$} & \multicolumn{2}{c}{$23.76\pm3.39$} \\
\midrule
\multirow{3}{*}{\makecell[l]{\textit{PnP-PGD} + \\\textbf{DnCNN}~\cite{zhang2017beyond}}}
& \textbf{ViSTA}  & $27.84\pm4.22$ & $\mathbf{27.79\pm4.24}$ & $\mathbf{27.19\pm2.21}$ & $\mathbf{27.02\pm2.00}$ & $26.58\pm4.23$ & $\mathbf{26.55\pm4.18}$ \\
& \textbf{Equiv.} & $\mathbf{28.10\pm4.17}$ & \xmark & \uline{$26.80\pm2.32$} & \xmark & $\mathbf{27.08\pm4.19}$ & \uline{$25.39\pm5.97$} \\
& \textbf{Vanilla} & \uline{$27.99\pm4.20$} & \xmark & $26.76\pm2.33$ & \xmark & \uline{$26.97\pm4.24$} & \xmark \\
\midrule
\multirow{3}{*}{\makecell[l]{\textit{PnP-HQS} + \\\textbf{DRUNet}~\cite{zhang2021plug}}}
& \textbf{ViSTA}  & $27.69\pm4.42$ & $\mathbf{27.68\pm4.41}$ & $29.01\pm4.25$ & $\mathbf{29.01\pm4.25}$ & $26.72\pm4.27$ & \uline{$26.71\pm4.26$} \\
& \textbf{Equiv.} &
$\mathbf{27.88\pm4.40}$ & \uline{$27.58\pm4.16$} & $\mathbf{29.77\pm4.31}$ & \uline{$26.20\pm7.88$} & $\mathbf{26.93\pm4.29}$ & $\mathbf{26.82\pm4.29}$ \\
& \textbf{Vanilla} & \uline{$27.75\pm4.49$} & $26.76\pm5.17$ & \uline{$29.66\pm4.33$} & $27.46\pm6.54$ & \uline{$26.80\pm4.35$} & \xmark \\
\midrule
\multirow{3}{*}{\makecell[l]{\textit{PnP-ADMM} + \\ \textbf{DRUNet}~\cite{zhang2021plug}}} & \textbf{ViSTA} & \uline{$27.45\pm4.15$} & $\mathbf{27.45\pm4.15}$ & $28.57\pm3.63$ & $\mathbf{28.56\pm3.62}$ & \uline{$26.56\pm3.97$} & \uline{$26.55\pm3.97$} \\
 & \textbf{Equiv.} & $\mathbf{27.62\pm4.12}$ & \uline{$18.29\pm26.69$} & $\mathbf{29.01\pm3.57}$ & \uline{$20.52\pm8.51$} & $\mathbf{26.75\pm4.00}$ & $\mathbf{26.69\pm4.02}$ \\
 & \textbf{Vanilla} & $27.44\pm4.16$ & \xmark & \uline{$28.86\pm3.54$} & $16.04\pm14.04$ & $26.54\pm3.99$ & \xmark \\
\midrule
\multirow{3}{*}{\makecell[l]{\textit{PnP-HQS} + \\\textbf{DiffUNet}~\cite{choi2021conditioning}}}
& \textbf{ViSTA}   & $\mathbf{27.78\pm3.93}$ & $\mathbf{27.15\pm3.31}$ & $29.21\pm3.60$ & $\mathbf{29.00\pm3.56}$ & \uline{$26.59\pm3.74$} & $\mathbf{25.67\pm3.30}$ \\
& \textbf{Equiv.} & $27.77\pm3.81$ & \uline{$21.63\pm2.27$} & $\mathbf{29.54\pm3.46}$ & \uline{$25.87\pm3.47$} & $\mathbf{26.71\pm3.78}$ & \uline{$20.26\pm1.52$} \\
 & \textbf{Vanilla} & $27.55\pm3.68$ & $19.25\pm1.11$ & \uline{$29.38\pm3.39$} & $21.64\pm1.73$ & $26.51\pm3.64$ & $18.14\pm0.71$ \\
\midrule
\multirow{3}{*}{\makecell[l]{\textit{PnP-HQS} + \\\textbf{GSDRUNet}~\cite{hurault2022gradient}}} & \textbf{ViSTA} & $28.26\pm4.42$ & $\mathbf{28.22\pm4.38}$ & $30.47\pm4.38$ & $\mathbf{30.43\pm4.39}$ & $27.24\pm4.32$ & $\mathbf{27.23\pm4.31}$ \\
 & \textbf{Equiv.} & $\mathbf{28.38\pm4.50}$ & \uline{$26.59\pm4.64$} & $\mathbf{30.57\pm4.41}$ & \uline{$29.54\pm5.88$} & $\mathbf{27.34\pm4.35}$ & \uline{$25.78\pm4.20$} \\
 & \textbf{Vanilla} & \uline{$28.35\pm4.51$} & $24.89\pm6.55$ & \uline{$30.49\pm4.40$} & $23.98\pm9.15$ & \uline{$27.32\pm4.36$} & $24.75\pm7.41$ \\
\midrule
\textbf{DPIR}~\cite{zhang2021plug} & \textbf{Vanilla} & \multicolumn{2}{c}{$ 28.05\pm4.70 $} & \multicolumn{2}{c}{$ 29.55\pm4.52 $} & \multicolumn{2}{c}{$ 27.02\pm4.43 $} \\
\textbf{DiffPIR}~\cite{zhu_denoising_2023} & \textbf{Vanilla} & \multicolumn{2}{c}{ $ 27.84\pm3.67 $} & \multicolumn{2}{c}{$ 29.19\pm3.48 $} & \multicolumn{2}{c}{$ 27.01\pm3.88 $} \\
\textbf{GSPnP}~\cite{hurault2022gradient}  & \textbf{Vanilla} & \multicolumn{2}{c}{$ 28.92\pm4.61 $} & \multicolumn{2}{c}{$ 30.99\pm4.39 $} & \multicolumn{2}{c}{$ 27.38\pm4.48 $} \\
\bottomrule
\end{tabular}
\caption{PSNR results (mean $\pm$ std. dev.) for various applications using the PnP-PGD, PnP-HQS, and PnP-ADMM frameworks, averaged over the CBSD10 dataset. Gaussian noise with a standard deviation of $2\%$ was added. We compare against state-of-the-art methods: DPIR, DiffPIR, and GSPnP. \xmark~indicates divergence of the iterates.}
\label{tab:cnn_psnr_results}
\end{table*}

\begin{figure*}[ht]
    \centering
    \subfloat[original]{
    \includegraphics[width=0.32\columnwidth]{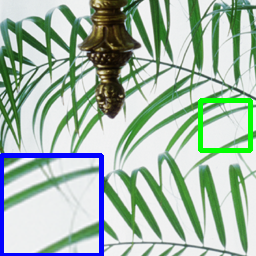}}\hfill
      \subfloat[observed ($12.51$)]{
    \includegraphics[width=0.32\columnwidth]{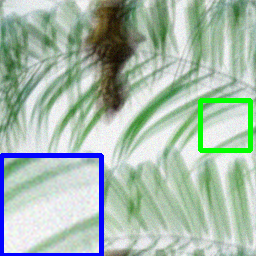}}\hfill
      \subfloat[Vanilla-PnP ($24.82$)]{
    \includegraphics[width=0.32\columnwidth]{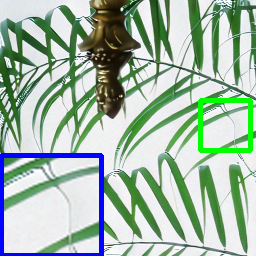}}\hfill
    \subfloat[DPIR ($28.87$)]{
    \includegraphics[width=0.32\columnwidth]{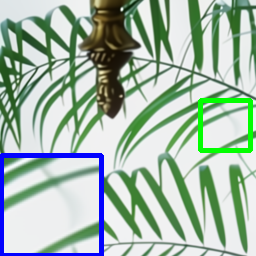}}\hfill
    \subfloat[GSPnP ($30.39$)]{
    \includegraphics[width=0.32\columnwidth]{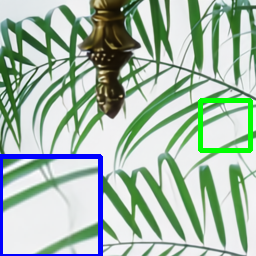}}\hfill
      \subfloat[ViSTA ($29.19$)]{
    \includegraphics[width=0.32\columnwidth]{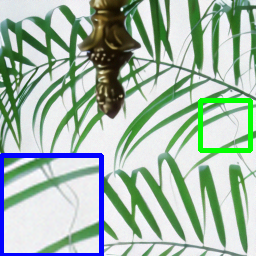}}\hfill

    \caption{Motion deblurring results on the \textit{leaves} image from the set3c dataset, using PnP-HQS+DiffUNet framework. We used kernel 8 from~\cite{levin_kernel_2009} and added Gaussian noise with standard deviation $0.03$. For ViSTA-PnP, we use a fixed $\theta_k = 0.05$ and the viscosity operator $\cS_{\mathrm{NLM}}$. The added viscosity improved the peak PSNR by $1.5$ dB over Vanilla-PnP and $1$ dB over Equivariant-PnP (see~\cref{fig:diff-psnr}). In particular, the visual quality  for ViSTA-PnP is better than DPIR~\cite{zhang2021plug} and GSPnP~\cite{hurault2022gradient}.}
    \label{fig:diffunet-vs-dpir}
\end{figure*}

\begin{figure}[ht]
    \centering
    \includegraphics[width=0.8\columnwidth]{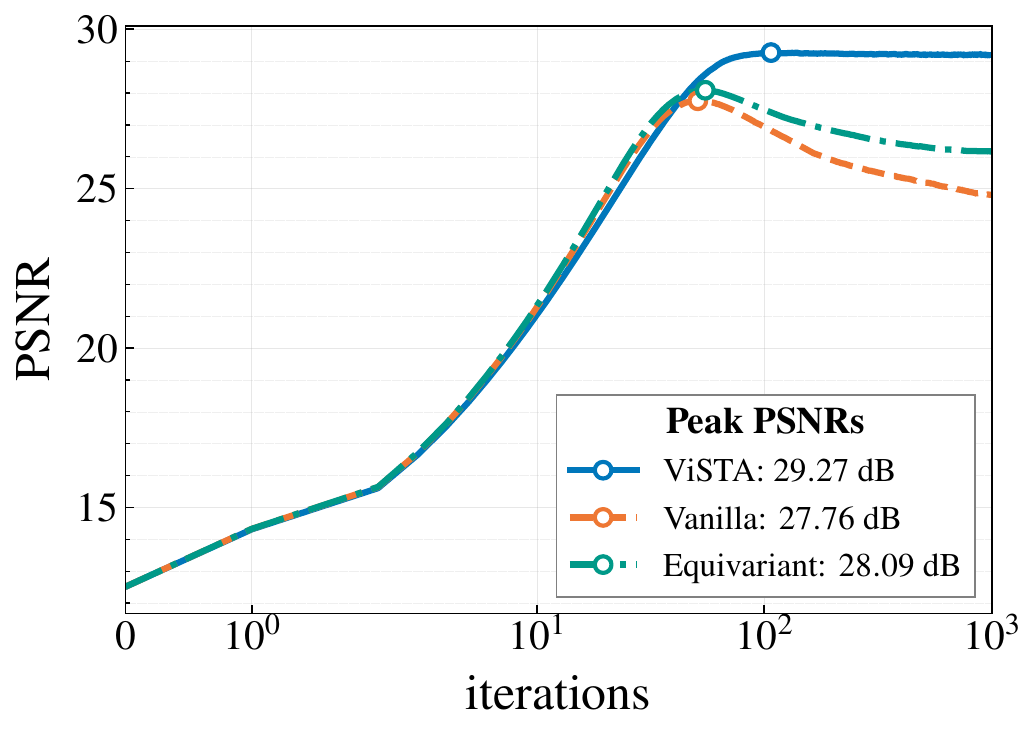}
    \caption{PSNR plot for the experiment in~\cref{fig:diffunet-vs-dpir}.}
    \label{fig:diff-psnr}
\end{figure}

\begin{figure*}[ht]
    \centering
    \subfloat[original]{
    \includegraphics[width=0.19\linewidth]{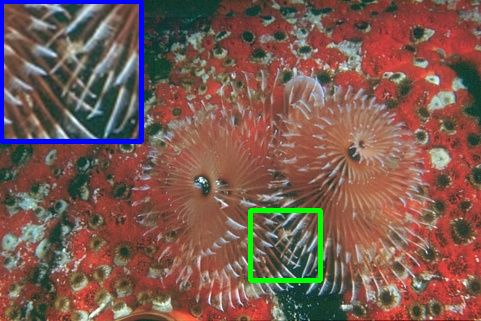}}\hfill
      \subfloat[observed ($22.69$)]{
    \includegraphics[width=0.19\linewidth]{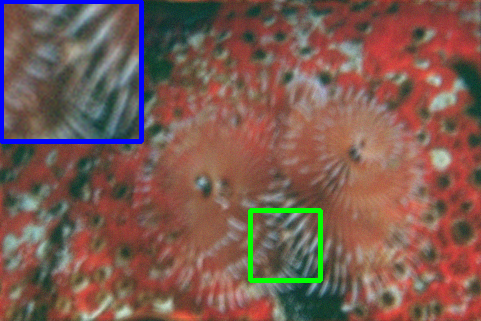}}\hfill
      \subfloat[Vanilla-PnP ($21.80$)]{
    \includegraphics[width=0.19\linewidth]{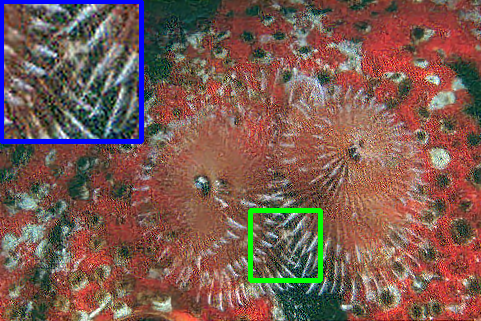}}\hfill
    \subfloat[Equivariant-PnP ($22.00$)]{
    \includegraphics[width=0.19\linewidth]{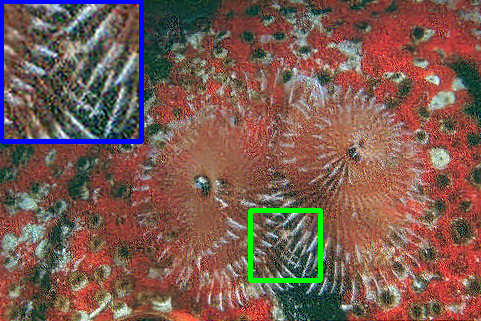}}\hfill
    \subfloat[ViSTA ($28.50$)]{
    \includegraphics[width=0.19\linewidth]{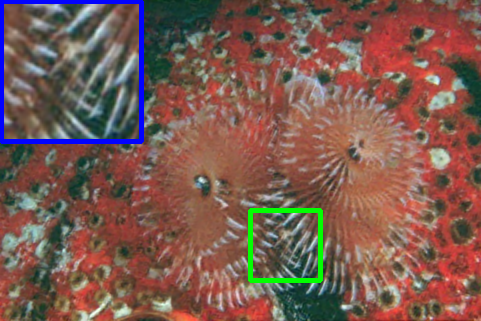}}\hfill

    \caption{Motion deblurring results on the \textit{coral} image from the CBSD68 dataset, using blur kernel 3 from~\cite{levin_kernel_2009} and additive Gaussian noise with standard deviation $0.02$. Reconstruction was performed using the PnP-PGD+MMO framework. For ViSTA-PnP, we set $\Theta = 0.02$ and used the viscosity operator $\cS_{\mathrm{NLM}}$. ViSTA-PnP successfully stabilizes the reconstruction around the peak PSNR, whereas the PSNR for Vanilla-PnP and Equivariant-PnP drops sharply by $6.7$ dB and $6.5$ dB.}
    \label{fig:mmo_vista_vs_equiv}
\end{figure*}

\begin{figure}[ht]
    \centering
    \includegraphics[width=0.8\columnwidth]{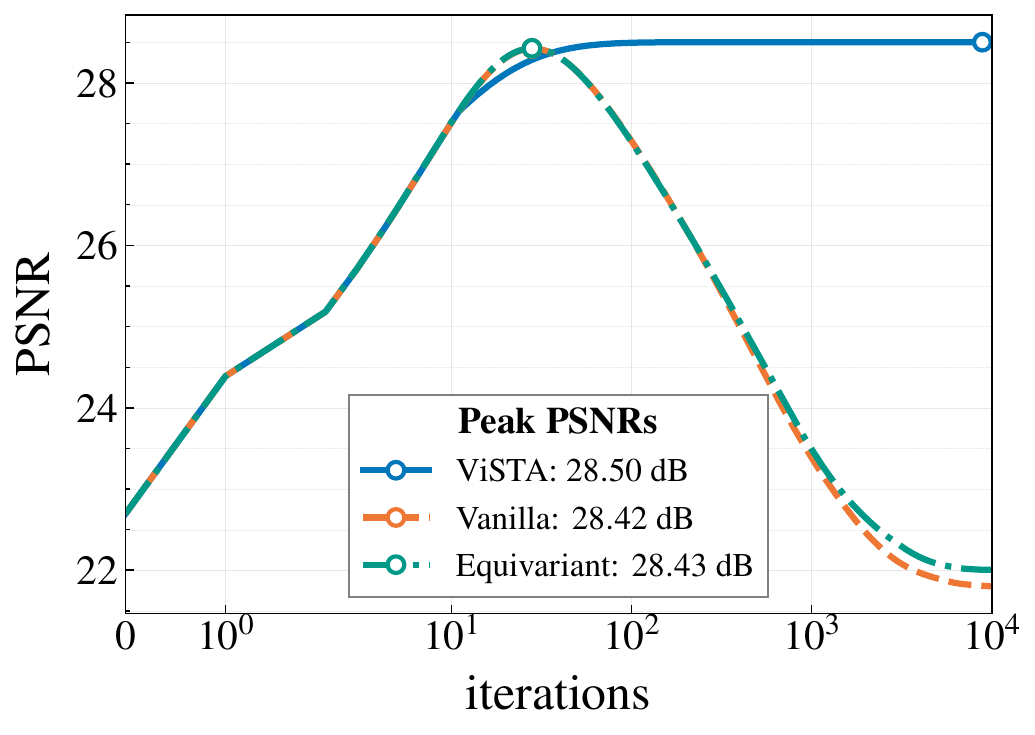}
    \caption{PSNR plot for the experiment in~\cref{fig:mmo_vista_vs_equiv}.}
    \label{fig:mmo-psnr}
\end{figure}

\begin{figure*}[ht]
    \centering
    \subfloat[original]{
    \includegraphics[width=0.32\columnwidth]{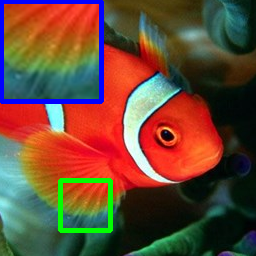}}\hfill
    \subfloat[bicubic]{
    \includegraphics[width=0.32\columnwidth]{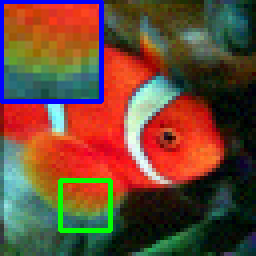}}\hfill
    \subfloat[fixed point $\p$]{
    \includegraphics[width=0.32\columnwidth]{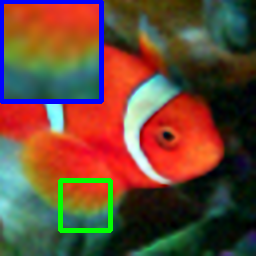}}\hfill
    \subfloat[DPIR]{
    \includegraphics[width=0.32\columnwidth]{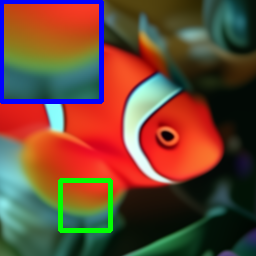}}\hfill
    \subfloat[GSPnP]{
    \includegraphics[width=0.32\columnwidth]{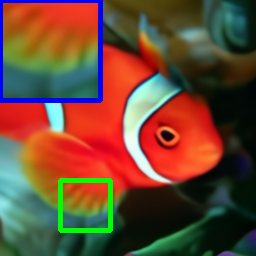}}\hfill
    \subfloat[ViSTA]{
    \includegraphics[width=0.32\columnwidth]{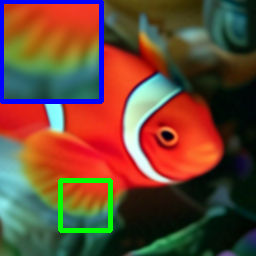}}\hfill
    \caption{Reconstruction results for $4\times$ superresolution (additive noise $0.03$) on the \textit{fish} image from the General100 dataset~\cite{general100}, using PnP-ADMM+DRUNet framework. The viscosity operator $\cS_{\mathrm{NLM}}$ was used in ViSTA-PnP. Also shown is the unique fixed point of the operator, which is used in~\eqref{etak} to compute the viscosity index. For comparison, results from DPIR~\cite{zhang2021plug} and GSPnP~\cite{hurault2022gradient} are included. The corresponding PSNR values are: (b) $22.71$, (c) $26.69$, (d) $29.28$, (e) $30.06$  and (f) $29.07$.}
    \label{fig:drunet_admm_4xSR}
\end{figure*}

\begin{figure}[ht]
    \centering
    \includegraphics[width=0.8\columnwidth]{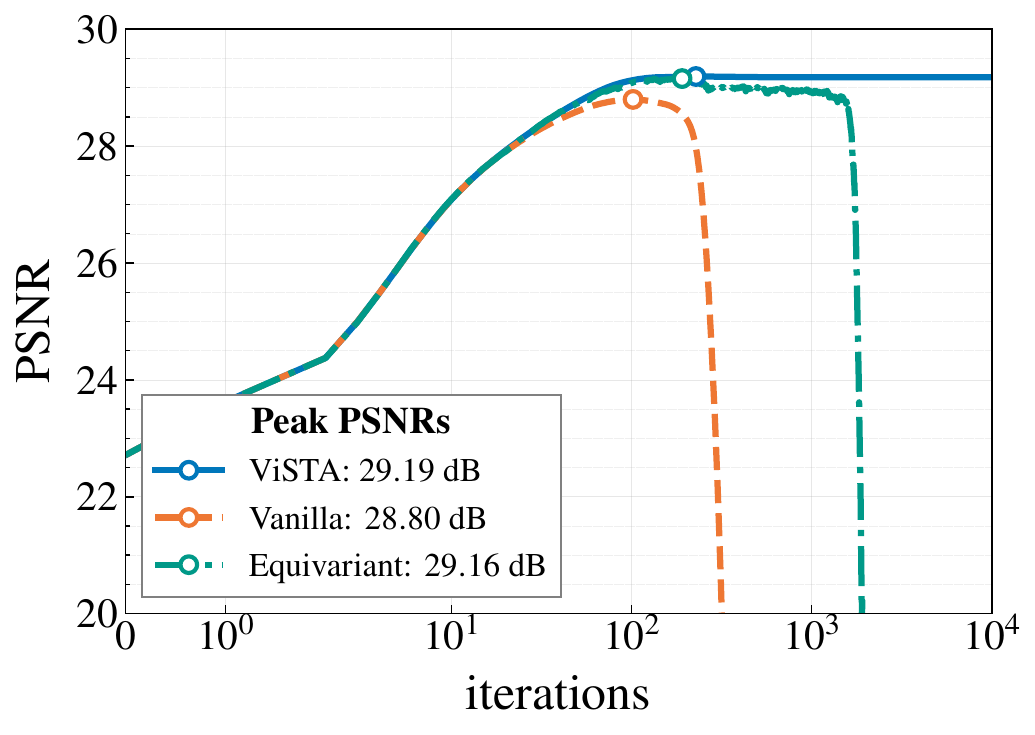}
    \caption{PSNR plot for the experiment in~\cref{fig:drunet_admm_4xSR}.}
    \label{fig:dru-psnr}
\end{figure}

\subsection{Stability Analysis}
We analyze how stability influences PSNR across iterations. Ideally, we aim to retain the peak PSNR while eliminating the sensitivity to stopping time. In~\cref{tab:cnn_psnr_results,tab:vit_psnr_results}, we compare the peak and asymptotic PSNRs in three applications, averaged on the CBSD10 data set. For ViSTA, we use the viscosity function defined in~$\cS_{\mathrm{NLM}}$, with the following values of $\Theta$ in~\eqref{thetak}: DnCNN (0.01), MMO (0.02), DRUNet (0.1), GSDRUNet (0.02), DiffUNet (0.2), Restormer (0.5), and SCUNet (0.5). A fixed value of $\Theta$ works well for a given denoiser in different applications and base algorithms, although fine-tuning can improve performance.

We use DnCNN and MMO as blind denoisers trained on noise levels in $[0, 2]/255$. DRUNet, GSDRUNet, and DiffUNet are used as nonblind denoisers, trained over the broader range $[0, 50]/255$ and evaluated at a fixed noise level of $\sigma = 5/255$. Restormer and SCUNet are blind denoisers trained on the $[0, 50]/255$ range. For Equivariant-PnP, we adopt the random sampling variant.

We find that ViSTA either improves the final PSNR or shows only a slight drop compared to Vanilla-PnP and Equivariant-PnP.  In contrast, Vanilla-PnP and Equivariant-PnP often fail to maintain their peak performance. As shown in~\cref{tab:cnn_psnr_results}, several Equivariant-PnP and Vanilla-PnP configurations diverge completely (marked with \xmark), highlighting their sensitivity to denoiser and optimization dynamics.  In contrast, ViSTA never diverges in any setting and delivers competitive or superior PSNR values.

\subsection{Robustness across Denoisers}

For the diffusion-based DiffUNet denoiser and the potential-based GSDRUNet denoiser, ViSTA achieves the best or near-best asymptotic scores while maintaining close alignment with peak values. This aligns well with the goal of reducing sensitivity to early stopping (see \cref{fig:diff-psnr}). Importantly, these results confirm that the viscosity-based regularization in ViSTA can enhance robustness not only under ideal conditions but also in challenging regimes where other methods fail.

\begin{figure}[t]
    \centering
    \subfloat[original]{
        \includegraphics[width=.31\columnwidth]{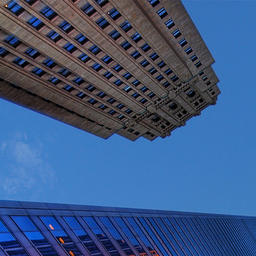}}\hfill
    \subfloat[Observed]{
        \includegraphics[width=.31\columnwidth]{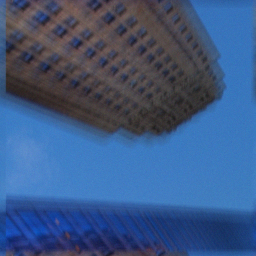}}\hfill
    \subfloat[DRUNet (ViSTA)]{
        \includegraphics[width=.31\columnwidth]{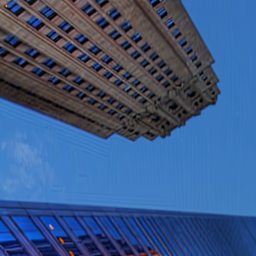}}\hfill
    \subfloat[Restormer (ViSTA)]{
        \includegraphics[width=.31\columnwidth]{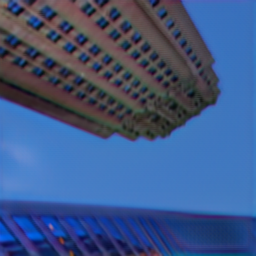}}\hfill
    \subfloat[SCUNet (ViSTA)]{
        \includegraphics[width=.31\columnwidth]{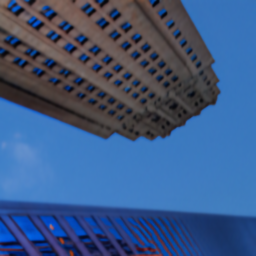}}\hfill
    \subfloat[Restormer (Equiv.)]{
        \includegraphics[width=.31\columnwidth]{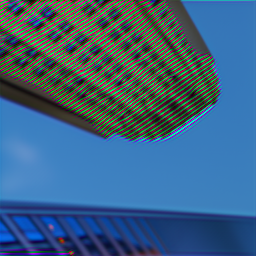}}\hfill
    \subfloat[SCUNet (Equiv.)]{
        \includegraphics[width=.31\columnwidth]{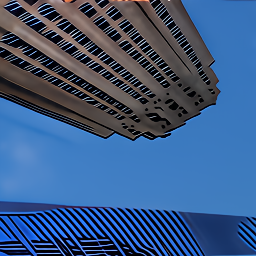}}\hfill
    \subfloat[Restormer (Vanilla)]{
        \includegraphics[width=.31\columnwidth]{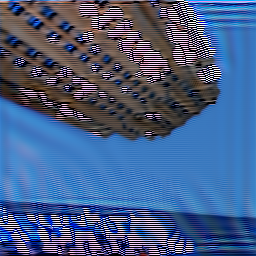}}\hfill
    \subfloat[SCUNet (Vanilla)]{
        \includegraphics[width=.31\columnwidth]{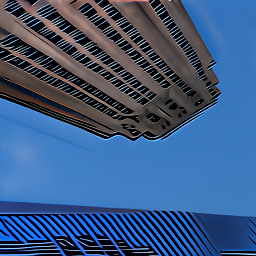}}\hfill
    \caption{Motion deblurring results on the \textit{skyscraper} image from the Urban100 dataset~\cite{urban100}, using blur kernel 7~\cite{levin_kernel_2009} and additive noise level $0.01$. We compare Vanilla-PnP, Equivariant-PnP, and ViSTA-PnP, using PnP-HQS+Restormer~\cite{zamir2022restormer} and PnP-HQS+SCUNet~\cite{zhang2023practical} frameworks. PSNR values: (b) $20.91$, (c) $28.31$, (d) $24.92$, (e) $25.17$, (f) $17.46$, (g) $16.86$, (h) $-8.44$, (i) $16.29$.}
    \label{fig:pnp_trans_den_motion_deblur}
\end{figure}

\begin{table*}[htpb]
\centering
\small
\begin{tabular}{lccccccc}
\toprule
\multirow{3}{*}{\textbf{Framwork}} & \multirow{3}{*}{\textbf{Method}} & \multicolumn{4}{c}{\textbf{Deblur}} & \multicolumn{2}{c}{\textbf{Superresolution}} \\
 &  & \multicolumn{2}{c}{\textbf{Gaussian}} & \multicolumn{2}{c}{\textbf{Motion} (Kernel 3~\cite{levin_kernel_2009})} & \multicolumn{2}{c}{$\mathbf{2\times}$} \\
 &  & \textbf{Peak} & \textbf{Asymtotic} & \textbf{Peak} & \textbf{Asymtotic} & \textbf{Peak} & \textbf{Asymtotic} \\
\midrule
\multicolumn{2}{c}{\textbf{Observed}} & \multicolumn{2}{c}{$24.15\pm3.19$} & \multicolumn{2}{c}{$21.86\pm3.37$} & \multicolumn{2}{c}{$23.76\pm3.39$} \\
\midrule
\multirow{3}{*}{\makecell[l]{\textit{PnP-PGD} + \\\textbf{Restormer}~\cite{zamir2022restormer}}} & \textbf{ViSTA} & $\mathbf{27.29\pm3.73}$ & $\mathbf{26.87\pm3.13}$ & $\mathbf{27.06\pm2.95}$ & $ \mathbf{26.74\pm2.40} $ & $ \mathbf{25.73\pm3.51} $ & $ \mathbf{25.49\pm3.64} $ \\
 & \textbf{Equiv.} & \uline{$ 27.21\pm3.76 $} & \uline{$ 9.14\pm5.49 $} & \uline{$ 26.63\pm3.05 $} & \uline{$ 12.56\pm6.35 $} & \uline{$ 25.16\pm3.06 $} & \uline{$ 3.91\pm4.60 $}  \\
 & \textbf{Vanilla} & $ 27.21\pm3.76 $ & $ 8.08\pm5.36 $ & $ 26.62\pm3.05 $ & $ 9.94\pm6.10 $ & $ 25.13\pm3.07 $ & $ 7.91\pm6.72 $  \\
\midrule
\multirow{3}{*}{\makecell[l]{\textit{PnP-PGD} + \\\textbf{SCUNet}~\cite{zhang2023practical}}} & \textbf{ViSTA} & $ \mathbf{27.81\pm4.87} $ & $ \mathbf{27.44\pm4.50} $ & $ \mathbf{27.33\pm4.68} $ & $ \mathbf{26.72\pm4.23} $ & $ \mathbf{25.82\pm4.58} $ & $ \mathbf{25.35\pm4.02} $ \\
 & \textbf{Equiv.} & \uline{$ 27.76\pm4.90 $} & \uline{$ 10.75\pm1.46 $} & \uline{$ 27.36\pm4.43 $} & \uline{$ 10.99\pm1.61 $} & \uline{$ 25.53\pm4.64 $} & \uline{$ 10.14\pm0.97 $}  \\
 & \textbf{Vanilla} & $ 27.75\pm4.86 $ & $ 10.73\pm1.43 $ & $ 27.28\pm4.40 $ & $ 10.52\pm1.25 $ & $ 25.55\pm4.66 $ & $ 10.28\pm1.20 $  \\
\bottomrule
\end{tabular}
\caption{PSNR results (mean $\pm$ std. dev.) for various applications using the PnP-PGD framework with Transformer-based denoisers, averaged over the CBSD10 dataset. Gaussian noise with a standard deviation of $2\%$ was added.}
\label{tab:vit_psnr_results}
\end{table*}

\begin{table}[ht]
\centering
\setlength{\tabcolsep}{2pt}
\scriptsize
\begin{tabular}{@{}cccccc@{}}
\toprule
& \textbf{Start} & & \textbf{Vanilla} & \textbf{Equivariant} & \textbf{ViSTA}  \\
\midrule
\multirow{2}{*}{\textbf{Deblur}} & \multirow{2}{*}{$ 22.13\pm3.31 $} & \textbf{Peak} & $ 28.59\pm3.16 $ & \uline{$28.60\pm3.16$} & $ \mathbf{28.74\pm3.30}$ \\
& & \textbf{Asymp.} & $ 24.22\pm4.08 $ & \uline{$26.16\pm3.48$} & $\mathbf{28.74\pm3.30}$\\
\midrule
\multirow{2}{*}{$\mathbf{2\times}$~\textbf{SR}} & \multirow{2}{*}{$ 23.76\pm3.39 $} & \textbf{Peak} & $ \mathbf{26.98\pm3.88} $ & $\mathbf{26.98\pm3.88}$ & $ 26.88\pm3.87 $\\
& & \textbf{Asymp.} & \uline{$ 26.93\pm3.87 $} &  $\mathbf{26.98\pm3.88}$ & $ 26.88\pm3.87 $\\
\bottomrule
\end{tabular}
\caption{PSNR results (mean $\pm$ standard deviation) on the CBSD10 dataset for $2\times$ superresolution and motion deblurring using kernel 3 from~\cite{levin_kernel_2009}, both with the PnP-PGD+MMO framework.}
\label{tab:mmo_psnr_results}
\end{table}

We next test the performance of ViSTA with the Lipschitz-constrained deep denoiser MMO~\cite{pesquet_learning_2021}. The results are shown in~\cref{tab:mmo_psnr_results}. We see that even when the underlying Vanilla-PnP method is inherently stable, introducing viscosity does not compromise its stability. However, in cases where the PSNR tends to drop over iterations (see~\cref{fig:mmo-psnr}), viscosity helps preserve the peak PSNR, leading to more stable performance.

In~\cref{fig:diffunet-vs-dpir}, we examine the effect of using a fixed viscosity parameter $\theta_k$ across all iterations in a motion deblurring experiment on the set3c images with PnP-HQS+DiffUNet. Applying a fixed viscosity $\theta_k = 0.05$ stabilizes the iterates under these conditions, yielding reconstructions that significantly outperform Vanilla-PnP and DPIR. In~\cref{fig:drunet_admm_4xSR}, we perform $4\times$ superresolution using PnP-ADMM+DRUNet. As seen in~\cref{fig:dru-psnr}, Equivariant-PnP again only delays the breakdown without fully preventing it. Although DPIR achieves a higher PSNR value than ViSTA-PnP in this case, the highlighted region reveals a lack of detail in its reconstruction, highlighting the qualitative advantage of ViSTA.

ViSTA can also stabilize PnP systems that use transformer-based denoisers such as Restormer~\cite{zamir2022restormer} and SCUNet~\cite{zhang2023practical}. ViSTA significantly reduces hallucination and checkerboard artifacts visible in Vanilla-PnP and Equivariant-PnP.  In fact, Equivariant-PnP fails entirely with Restormer and SCUNet in this setting, producing heavily distorted outputs and even negative PSNR values (e.g., $-8.44$ dB for Equivariant-PnP with SCUNet).

In contrast, ViSTA yields consistent and high-quality reconstructions across denoisers, achieving PSNR values of $24.92$~dB with Restormer and $25.17$~dB with SCUNet. These results are also reflected in the quantitative scores in~\cref{tab:vit_psnr_results}, where Equivariant-PnP and Vanilla-PnP either degrade significantly or fail entirely, whereas ViSTA achieves stability with high asymptotic PSNRs in different types of degradation. This demonstrates that ViSTA improves stability in diffusion-based models and generalizes well to modern transformer-based denoisers.

\begin{figure*}[htpb]
    \centering
    \subfloat[$2\times$SR; PnP-HQS+DRUNet]{\resizebox{0.32\linewidth}{!}{
    \definecolor{barcolor}{RGB}{100, 150, 255}
    \begin{tikzpicture}[y=0.75cm, x=1.25cm]
    \draw[thick] (0,6.2) -- (6.8,6.2);

    \node[anchor=west] at (0, 5.65) {\large\textbf{Variant}};
    \node[anchor=west] at (2.8, 5.65) {\large\textbf{Time (ms)}};

    \draw[thick] (0,5.1) -- (6.8,5.1);

    \node[anchor=west, text width=2.3cm, align=left] at (0.1, 4.5) {ViSTA};

    \filldraw[fill=barcolor, draw=none] (2.8, 4.3) rectangle +(2.6162811777833626, 0.4);
    \node[anchor=west] at (5.566281177783363, 4.5) {0.228};

    \node[anchor=west, text width=2.3cm, align=left] at (0.1, 3.7) {Equiv. (rand)};

    \filldraw[fill=barcolor, draw=none] (2.8, 3.5) rectangle +(1.689947600738769, 0.4);
    \node[anchor=west] at (4.639947600738769, 3.7) {0.109};

    \node[anchor=west, text width=2.3cm, align=left] at (0.1, 2.9000000000000004) {Equiv. (avgd)};

    \filldraw[fill=barcolor, draw=none] (2.8, 2.7) rectangle +(3.4000000000000004, 0.4);
    \node[anchor=west] at (6.3500000000000005, 2.9000000000000004) {0.389};

    \node[anchor=west, text width=2.3cm, align=left] at (0.1, 2.1000000000000005) {Vanilla};

    \filldraw[fill=barcolor, draw=none] (2.8, 1.9000000000000006) rectangle +(1.272977447899479, 0.4);
    \node[anchor=west] at (4.2229774478994795, 2.1000000000000005) {0.072};

    \draw[thick] (0,1.5000000000000004) -- (6.8,1.5000000000000004);

    \end{tikzpicture}
    }} \hfill
    \subfloat[Gaussian Deblur; PnP-PGD+DnCNN]{\resizebox{0.32\linewidth}{!}{
    \definecolor{barcolor}{RGB}{255,69,0}
    \begin{tikzpicture}[y=0.75cm, x=1.25cm]
    \draw[thick] (0,6.2) -- (6.8,6.2);

    \node[anchor=west] at (0, 5.65) {\large\textbf{Variant}};
    \node[anchor=west] at (2.8, 5.65) {\large\textbf{Time (ms)}};

    \draw[thick] (0,5.1) -- (6.8,5.1);

    \node[anchor=west, text width=2.3cm, align=left] at (0.1, 4.5) {ViSTA};

    \filldraw[fill=barcolor, draw=none] (2.8, 4.3) rectangle +(3.4000000000000004, 0.4);
    \node[anchor=west] at (6.3500000000000005, 4.5) {0.195};

    \node[anchor=west, text width=2.3cm, align=left] at (0.1, 3.7) {Equiv. (rand)};

    \filldraw[fill=barcolor, draw=none] (2.8, 3.5) rectangle +(1.7390098318515361, 0.4);
    \node[anchor=west] at (4.6890098318515365, 3.7) {0.068};

    \node[anchor=west, text width=2.3cm, align=left] at (0.1, 2.9000000000000004) {Equiv. (avgd)};

    \filldraw[fill=barcolor, draw=none] (2.8, 2.7) rectangle +(3.0787908119269254, 0.4);
    \node[anchor=west] at (6.028790811926926, 2.9000000000000004) {0.164};

    \node[anchor=west, text width=2.3cm, align=left] at (0.1, 2.1000000000000005) {Vanilla};

    \filldraw[fill=barcolor, draw=none] (2.8, 1.9000000000000006) rectangle +(2.0374881214532565, 0.4);
    \node[anchor=west] at (4.987488121453257, 2.1000000000000005) {0.085};

    \draw[thick] (0,1.5000000000000004) -- (6.8,1.5000000000000004);

    \end{tikzpicture}
    }} \hfill
    \subfloat[Motion Deblur; PnP-HQS+DiffUNet]{\resizebox{0.32\linewidth}{!}{
    \definecolor{barcolor}{RGB}{0,255,127}
    \begin{tikzpicture}[y=0.75cm, x=1.25cm]
    \draw[thick] (0,6.2) -- (6.8,6.2);

    \node[anchor=west] at (0, 5.65) {\large\textbf{Variant}};
    \node[anchor=west] at (2.8, 5.65) {\large\textbf{Time (ms)}};

    \draw[thick] (0,5.1) -- (6.8,5.1);

    \node[anchor=west, text width=2.3cm, align=left] at (0.1, 4.5) {ViSTA};

    \filldraw[fill=barcolor, draw=none] (2.8, 4.3) rectangle +(3.187960793965834, 0.4);
    \node[anchor=west] at (6.137960793965834, 4.5) {0.413};

    \node[anchor=west, text width=2.3cm, align=left] at (0.1, 3.7) {Equiv. (rand)};

    \filldraw[fill=barcolor, draw=none] (2.8, 3.5) rectangle +(2.1730899292460744, 0.4);
    \node[anchor=west] at (5.123089929246074, 3.7) {0.229};

    \node[anchor=west, text width=2.3cm, align=left] at (0.1, 2.9000000000000004) {Equiv. (avgd)};

    \filldraw[fill=barcolor, draw=none] (2.8, 2.7) rectangle +(3.4000000000000004, 0.4);
    \node[anchor=west] at (6.3500000000000005, 2.9000000000000004) {0.461};

    \node[anchor=west, text width=2.3cm, align=left] at (0.1, 2.1000000000000005) {Vanilla};

    \filldraw[fill=barcolor, draw=none] (2.8, 1.9000000000000006) rectangle +(1.960299844982764, 0.4);
    \node[anchor=west] at (4.910299844982764, 2.1000000000000005) {0.198};

    \draw[thick] (0,1.5000000000000004) -- (6.8,1.5000000000000004);

    \end{tikzpicture}
    }} \hfill
    \caption{Per-iteration runtime (in milliseconds; log scale) for three PnP variants—Vanilla-PnP, Equivariant-PnP, and ViSTA-PnP—combined with different frameworks: (a) PnP-HQS+DRUNet for $2\times$ superresolution, (b) PnP-PGD+DnCNN for Gaussian deblurring, and (c) PnP-HQS+DiffUNet for motion deblurring. Results are averaged over the CBSD10 dataset.
    }
    \label{fig:time_comp}
\end{figure*}
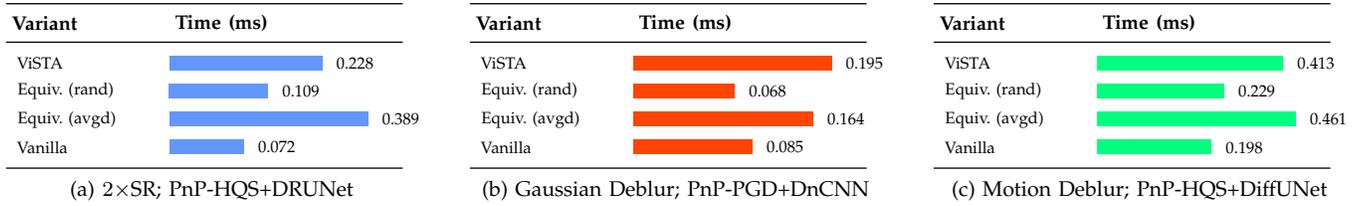

\subsection{Computational Aspects}
\label{sec:overheads}

The ViSTA-PnP algorithm in~\cref{algo:vista} includes two additional steps compared to Vanilla-PnP.  The first is computing the fixed point $\p$ of the viscosity operator $\cS$, which is done only once at the start of the reconstruction process. This takes about 2 seconds for a $256\times 256$ RGB image and adds minimal overhead. The second step is computing the viscosity index $\theta_k$ before applying $\cS$ in each iteration. This accounts for most of the extra cost introduced by ViSTA-PnP. In~\cref{fig:time_comp}, we show the per-iteration runtime overhead of ViSTA-PnP compared to Vanilla-PnP and Equivariant-PnP. The results show that ViSTA-PnP adds only a modest overhead per iteration, remaining faster than the more expensive $D_4$-averaged Equivariant-PnP variant.

On average across tasks, ViSTA-PnP is approximately $151\%$ slower than Vanilla-PnP, $106\%$ slower than Equivariant-PnP (random), but $23\%$ faster than Equivariant-PnP ($D_4$-averaged), reflecting a favorable trade-off between stability and runtime. In summary, ViSTA offers a good balance between stability and efficiency. It keeps the runtime practical while avoiding the high computational cost of equivariant averaging.

\subsection{Discussion}

We have shown that ViSTA is a robust and flexible framework for achieving stable and predictable PnP reconstructions with a wide range of pretrained denoisers.  Intuitively, viscosity helps by suppressing small artifacts early in the iterations, which, if left unchecked, could grow into poor reconstructions. The viscosity operator fills in these unstable regions, allowing the denoiser and the overall algorithm to stabilize and converge to a good solution, as seen in~\cref{fig:equi-diverge}, or to continue improving, as in~\cref{fig:mmo-psnr,fig:diff-psnr,fig:dru-psnr}. 

A key question is whether we can identify the causes of instability in PnP systems, like those illustrated in~\cref{fig:frontimage}. While having such explanations would allow for more targeted solutions, they are often hard to obtain and can depend heavily on the specific denoiser or proximal algorithm being used. This is where ViSTA stands out: it operates at the level of the overall operator and does not rely on understanding the exact source of instability. In other words, ViSTA can effectively stabilize a PnP system even when the underlying reason for its instability is unknown.

Although increasing viscosity may lead to a slight drop in PSNR, our experiments show that the peak performance is usually preserved or only mildly affected. Finally, it is important to note that early stopping is not a reliable fix for instability. As seen in results like~\cref{fig:mmo-psnr,fig:diff-psnr,fig:dru-psnr}, the point at which PSNR peaks varies across settings and is difficult to predict ahead of time.

\section{Conclusion}
\label{sec:conclusion}

We introduced ViSTA-PnP, a simple and effective baseline for stabilizing plug-and-play (PnP) methods with pretrained denoisers. We demonstrated its versatility across a range of base algorithms and denoising models. The stability guarantees offered by ViSTA-PnP complement the strong empirical performance of existing PnP frameworks. While ViSTA-PnP may lead to a modest reduction in PSNR in some cases, the primary objective of this work is stabilization rather than performance enhancement.

Two key components of ViSTA-PnP are the upper bound $\Theta$ and the viscosity operator defined in~\eqref{eq:NLM}. While $\Theta$ is simple to set, it is critical in balancing stability and reconstruction quality. The proposed viscosity operator typically outperforms the naive contraction $\cS = \beta I$ with $\beta \approx 1$, which often fails to stabilize the iterations effectively while maintaining high PSNR. However, this improvement comes at the cost of increased computational complexity.

Our findings point to several promising directions for future research, such as developing cost-efficient viscosity operators and exploring more advanced stabilization methods based on viscosity theory~\cite{attouch_viscosity_1996}. Although our experiments focused on deblurring and superresolution, the ViSTA-PnP framework can be applied to other inverse problems where PnP methods have proven successful~\cite{ahmad2020plug,yuan2020plug,jin2017deep}.

\bibliographystyle{IEEEtran}
\bibliography{reference}

\begin{thebibliography}{10}
\providecommand{\url}[1]{#1}
\csname url@samestyle\endcsname
\providecommand{\newblock}{\relax}
\providecommand{\bibinfo}[2]{#2}
\providecommand{\BIBentrySTDinterwordspacing}{\spaceskip=0pt\relax}
\providecommand{\BIBentryALTinterwordstretchfactor}{4}
\providecommand{\BIBentryALTinterwordspacing}{\spaceskip=\fontdimen2\font plus
\BIBentryALTinterwordstretchfactor\fontdimen3\font minus
  \fontdimen4\font\relax}
\providecommand{\BIBforeignlanguage}[2]{{%
\expandafter\ifx\csname l@#1\endcsname\relax
\typeout{** WARNING: IEEEtran.bst: No hyphenation pattern has been}%
\typeout{** loaded for the language `#1'. Using the pattern for}%
\typeout{** the default language instead.}%
\else
\language=\csname l@#1\endcsname
\fi
#2}}
\providecommand{\BIBdecl}{\relax}
\BIBdecl

\bibitem{bouman2022foundations}
C.~A. Bouman, \emph{Foundations of Computational Imaging: A Model-Based
  Approach}.\hskip 1em plus 0.5em minus 0.4em\relax SIAM, 2022.

\bibitem{rudin1992nonlinear}
L.~I. Rudin, S.~Osher, and E.~Fatemi, ``Nonlinear total variation based noise
  removal algorithms,'' \emph{Phys. D: Nonlinear Phenom.}, vol.~60, no. 1-4,
  pp. 259--268, 1992.

\bibitem{geman1986markov}
S.~Geman and C.~Graffigne, ``Markov {R}andom {F}ield image models and their
  applications to computer vision,'' \emph{Proc. ICM}, vol.~1, p.~2, 1986.

\bibitem{jin2017deep}
K.~H. Jin, M.~T. McCann, E.~Froustey, and M.~Unser, ``Deep convolutional neural
  network for inverse problems in imaging,'' \emph{IEEE Trans. Image Process.},
  vol.~26, no.~9, pp. 4509--4522, 2017.

\bibitem{guo_mambair_2025}
H.~Guo, J.~Li, T.~Dai, Z.~Ouyang, X.~Ren, and S.-T. Xia, ``{MambaIR}: A simple
  baseline for image restoration with state-space model,'' \emph{Proc. ECCV},
  pp. 222--241, 2024.

\bibitem{liang_swinir_2021}
J.~Liang, J.~Cao, G.~Sun, K.~Zhang, L.~Van~Gool, and R.~Timofte, ``{SwinIR}:
  Image restoration using swin transformer,'' \emph{Proc. ICCV Workshops}, pp.
  1833--1844, 2021.

\bibitem{zamir2022restormer}
S.~W. Zamir, A.~Arora, S.~Khan, M.~Hayat, F.~S. Khan, and M.-H. Yang,
  ``Restormer: {E}fficient transformer for high-resolution image restoration,''
  \emph{Proc. CVPR}, pp. 5718--5739, 2022.

\bibitem{venkatakrishnan2013pnp}
S.~V. Venkatakrishnan, C.~A. Bouman, and B.~Wohlberg, ``Plug-and-{Play} priors
  for model based reconstruction,'' \emph{Proc.~IEEE GlobalSIP}, pp. 945--948,
  2013.

\bibitem{sreehari2016plug}
S.~Sreehari, S.~V. Venkatakrishnan, B.~Wohlberg, G.~T. Buzzard, L.~F. Drummy,
  J.~P. Simmons, and C.~A. Bouman, ``Plug-and-play priors for bright field
  electron tomography and sparse interpolation,'' \emph{IEEE Trans. Comput.
  Imaging}, vol.~2, no.~4, pp. 408--423, 2016.

\bibitem{romano2017little}
Y.~Romano, M.~Elad, and P.~Milanfar, ``The little engine that could:
  {Regularization} by {Denoising} ({RED}),'' \emph{SIAM J. Imaging Sci.},
  vol.~10, no.~4, pp. 1804--1844, 2017.

\bibitem{zhang2021plug}
K.~Zhang, Y.~Li, W.~Zuo, L.~Zhang, L.~Van~Gool, and R.~Timofte, ``Plug-and-play
  image restoration with deep denoiser prior,'' \emph{IEEE Trans. Pattern Anal.
  Mach. Intell.}, vol.~44, no.~10, pp. 6360--6376, 2021.

\bibitem{hurault2022gradient}
S.~Hurault, A.~Leclaire, and N.~Papadakis, ``Gradient step denoiser for
  convergent plug-and-play,'' \emph{Proc. ICLR}, 2022.

\bibitem{chung_diffusion_2022}
H.~Chung, J.~Kim, M.~T. Mccann, M.~L. Klasky, and J.~C. Ye, ``Diffusion
  posterior sampling for general noisy inverse problems,'' \emph{Proc. ICLR},
  2022.

\bibitem{kawar_denoising_2022}
B.~Kawar, M.~Elad, S.~Ermon, and J.~Song, ``Denoising diffusion restoration
  models,'' \emph{Proc. NeurIPS}, pp. 23\,593--23\,606, 2022.

\bibitem{zhu_denoising_2023}
Y.~Zhu, K.~Zhang, J.~Liang, J.~Cao, B.~Wen, R.~Timofte, and L.~V. Gool,
  ``Denoising diffusion models for plug-and-play image restoration,''
  \emph{Proc. CVPR Workshops}, pp. 1219--1229, 2023.

\bibitem{lugmayr_repaint_2022}
A.~Lugmayr, M.~Danelljan, A.~Romero, F.~Yu, R.~Timofte, and L.~Van~Gool,
  ``{RePaint}: Inpainting using denoising diffusion probabilistic models,''
  \emph{Proc. CVPR}, pp. 11\,451--11\,461, 2022.

\bibitem{elad_image_2023}
M.~Elad, B.~Kawar, and G.~Vaksman, ``Image denoising: The deep learning
  revolution and beyond—a survey paper,'' \emph{SIAM J. Imaging Sci.},
  vol.~16, no.~3, pp. 1594--1654, 2023.

\bibitem{yuan2020plug}
X.~Yuan, Y.~Liu, J.~Suo, and Q.~Dai, ``Plug-and-play algorithms for large-scale
  snapshot compressive imaging,'' \emph{Proc. CVPR}, pp. 1444--1457, 2020.

\bibitem{wei2020tuning}
K.~Wei, A.~Aviles-Rivero, J.~Liang, Y.~Fu, H.~Huang, and C.-B. Sch\"{o}nlieb,
  ``Tuning-free plug-and-play proximal algorithm for inverse imaging
  problems,'' \emph{Proc. ICML}, pp. 10\,158--10\,169, 2020.

\bibitem{kamilov2023plug}
U.~S. Kamilov, C.~A. Bouman, G.~T. Buzzard, and B.~Wohlberg, ``Plug-and-play
  methods for integrating physical and learned models in computational imaging:
  Theory, algorithms, and applications,'' \emph{IEEE Signal Process. Mag.},
  vol.~40, no.~1, pp. 85--97, 2023.

\bibitem{ahmad2020plug}
R.~Ahmad, C.~A. Bouman, G.~T. Buzzard, S.~Chan, S.~Liu, E.~T. Reehorst, and
  P.~Schniter, ``Plug-and-play methods for magnetic resonance imaging: Using
  denoisers for image recovery,'' \emph{IEEE Signal Process. Mag.}, vol.~37,
  no.~1, pp. 105--116, 2020.

\bibitem{parikh2014proximal}
N.~Parikh and S.~Boyd, \emph{Proximal algorithms}.\hskip 1em plus 0.5em minus
  0.4em\relax Now Publishers, Inc., 2014.

\bibitem{bauschke2019convex}
H.~H. Bauschke and P.~L. Combettes, \emph{Convex Analysis and Monotone Operator
  Theory in Hilbert Spaces}.\hskip 1em plus 0.5em minus 0.4em\relax Springer,
  2011.

\bibitem{beck2017book}
A.~Beck, \emph{First-Order Methods in Optimization}.\hskip 1em plus 0.5em minus
  0.4em\relax {SIAM}, 2017.

\bibitem{geman1995nonlinear}
D.~Geman and C.~Yang, ``Nonlinear image recovery with half-quadratic
  regularization,'' \emph{IEEE Trans. Image Process.}, vol.~4, no.~7, pp.
  932--946, 1995.

\bibitem{pesquet_learning_2021}
J.-C. Pesquet, A.~Repetti, M.~Terris, and Y.~Wiaux, ``Learning maximally
  monotone operators for image recovery,'' \emph{SIAM J. Imaging Sci.},
  vol.~14, no.~3, pp. 1206--1237, 2021.

\bibitem{hertrich_convolutional_2021}
J.~Hertrich, S.~Neumayer, and G.~Steidl, ``Convolutional proximal neural
  networks and plug-and-play algorithms,'' \emph{Linear Algebra Appl.}, vol.
  631, pp. 203--234, 2021.

\bibitem{cohen_regularization_2021}
R.~Cohen, M.~Elad, and P.~Milanfar, ``Regularization by denoising via
  fixed-point projection ({RED}-{PRO}),'' \emph{SIAM J. Imaging Sci.}, vol.~14,
  no.~3, pp. 1374--1406, 2021.

\bibitem{hurault_proximal_2022}
S.~Hurault, A.~Leclaire, and N.~Papadakis, ``Proximal denoiser for convergent
  plug-and-play optimization with nonconvex regularization,'' \emph{Proc.
  ICML}, pp. 9483--9505, 2022.

\bibitem{nair2024averaged}
P.~Nair and K.~N. Chaudhury, ``Averaged deep denoisers for image
  regularization,'' \emph{J. Math. Imaging Vis.}, vol.~66, no.~3, p. 362–379,
  2024.

\bibitem{goujon2023neural}
A.~Goujon, S.~Neumayer, P.~Bohra, S.~Ducotterd, and M.~Unser, ``A
  neural-network-based convex regularizer for inverse problems,'' \emph{IEEE
  Trans. Comput. Imaging}, vol.~9, pp. 781--795, 2023.

\bibitem{goujon_learning_2024}
A.~Goujon, S.~Neumayer, and M.~Unser, ``Learning weakly convex regularizers for
  convergent image-reconstruction algorithms,'' \emph{SIAM J. Imaging Sci.},
  vol.~17, no.~1, pp. 91--115, 2024.

\bibitem{pourya2025dealing}
M.~Pourya, E.~Kobler, M.~Unser, and S.~Neumayer, ``{DEAL}ing with image
  reconstruction: Deep attentive least squares,'' \emph{Proc.~ICML}, 2025.

\bibitem{general100}
C.~Dong, C.~C. Loy, and X.~Tang, ``Accelerating the super-resolution
  convolutional neural network,'' \emph{Proc. ECCV}, pp. 391--407, 2016.

\bibitem{cohen2021has}
R.~Cohen, Y.~Blau, D.~Freedman, and E.~Rivlin, ``It has potential:
  {G}radient-driven denoisers for convergent solutions to inverse problems,''
  \emph{Proc.~NeurIPS}, pp. 18\,152--18\,164, 2021.

\bibitem{reehorst2018regularization}
E.~T. Reehorst and P.~Schniter, ``Regularization by denoising: {Clarifications}
  and new interpretations,'' \emph{IEEE Trans. Comput. Imaging}, vol.~5, no.~1,
  pp. 52--67, 2018.

\bibitem{tan2024provably}
H.~Y. Tan, S.~Mukherjee, J.~Tang, and C.-B. Sch\"{o}nlieb, ``Provably
  convergent plug-and-play quasi-{N}ewton methods,'' \emph{SIAM J. Imaging
  Sci.}, vol.~17, no.~2, pp. 785--819, 2024.

\bibitem{ryu2019plug}
E.~Ryu, J.~Liu, S.~Wang, X.~Chen, Z.~Wang, and W.~Yin, ``Plug-and-play methods
  provably converge with properly trained denoisers,'' \emph{Proc. ICML}, pp.
  5546--5557, 2019.

\bibitem{ACK2023-contractivity}
C.~D. Athalye, K.~N. Chaudhury, and B.~Kumar, ``On the contractivity of
  plug-and-play operators,'' \emph{IEEE Signal Process. Lett.}, vol.~30, pp.
  1447--1451, 2023.

\bibitem{arghya-tsp}
A.~Sinha, B.~Kumar, C.~D. Athalye, and K.~N. Chaudhury, ``Linear convergence of
  plug-and-play algorithms with kernel denoisers,'' \emph{IEEE Trans. Signal
  Process.}, vol.~73, pp. 2646--2659, 2025.

\bibitem{terris2024equivariant}
M.~Terris, T.~Moreau, N.~Pustelnik, and J.~Tachella, ``Equivariant
  plug-and-play image reconstruction,'' \emph{Proc. CVPR}, pp.
  25\,255--25\,264, 2024.

\bibitem{effland2020variational}
A.~Effland, E.~Kobler, K.~Kunisch, and T.~Pock, ``Variational networks: An
  optimal control approach to early stopping variational methods for image
  restoration,'' \emph{J. Math. Imaging Vis.}, vol.~62, pp. 396--416, 2020.

\bibitem{attouch_viscosity_1996}
H.~Attouch, ``Viscosity solutions of minimization problems,'' \emph{SIAM J.
  Optim.}, vol.~6, no.~3, pp. 769--806, 1996.

\bibitem{xu_viscosity_2004}
H.-K. Xu, ``Viscosity approximation methods for nonexpansive mappings,''
  \emph{J. Math. Anal. Appl.}, vol. 298, no.~1, pp. 279--291, 2004.

\bibitem{sabach_first_2017}
S.~Sabach and S.~Shtern, ``A first order method for solving convex bilevel
  optimization problems,'' \emph{SIAM J. Optim.}, vol.~27, no.~2, pp. 640--660,
  2017.

\bibitem{bauschke_fixed-point_2011}
H.~H. Bauschke, R.~S. Burachik, P.~L. Combettes, V.~Elser, D.~R. Luke, and
  H.~Wolkowicz, Eds., \emph{Fixed-{Point} {Algorithms} for {Inverse} {Problems}
  in {Science} and {Engineering}}.\hskip 1em plus 0.5em minus 0.4em\relax
  Springer International Publishing, 2011.

\bibitem{chan2016plug}
S.~H. Chan, X.~Wang, and O.~A. Elgendy, ``Plug-and-play {ADMM} for image
  restoration: Fixed-point convergence and applications,'' \emph{IEEE Trans.
  Comput. Imaging}, vol.~3, no.~1, pp. 84--98, 2017.

\bibitem{zhang2017beyond}
K.~Zhang, W.~Zuo, Y.~Chen, D.~Meng, and L.~Zhang, ``Beyond a {G}aussian
  denoiser: Residual learning of deep {CNN} for image denoising,'' \emph{IEEE
  Trans. Image Process.}, vol.~26, no.~7, pp. 3142--3155, 2017.

\bibitem{choi2021conditioning}
J.~Choi, S.~Kim, Y.~Jeong, Y.~Gwon, and S.~Yoon, ``{ILVR}: Conditioning method
  for denoising diffusion probabilistic models,'' \emph{Proc.~ICCV}, pp.
  14\,347--14\,356, 2021.

\bibitem{virmaux2018lipschitz}
A.~Virmaux and K.~Scaman, ``Lipschitz regularity of deep neural networks:
  analysis and efficient estimation,'' \emph{Proc. NeurIPS}, 2018.

\bibitem{gouk2021regularisation}
H.~Gouk, E.~Frank, B.~Pfahringer, and M.~J. Cree, ``Regularisation of neural
  networks by enforcing lipschitz continuity,'' \emph{Machine Learning}, vol.
  110, no.~2, pp. 393--416, 2021.

\bibitem{buades2005non}
A.~Buades, B.~Coll, and J.-M. Morel, ``A non-local algorithm for image
  denoising,'' \emph{Proc.~CVPR}, vol.~2, pp. 60--65, 2005.

\bibitem{levin_kernel_2009}
A.~Levin, Y.~Weiss, F.~Durand, and W.~T. Freeman, ``Understanding and
  evaluating blind deconvolution algorithms,'' \emph{Proc. CVPR}, pp.
  1964--1971, 2009.

\bibitem{bsd500}
D.~Martin, C.~Fowlkes, D.~Tal, and J.~Malik, ``A database of human segmented
  natural images and its application to evaluating segmentation algorithms and
  measuring ecological statistics,'' \emph{Proc. ICCV}, vol.~2, pp. 416--423,
  2001.

\bibitem{urban100}
J.-B. Huang, A.~Singh, and N.~Ahuja, ``Single image super-resolution from
  transformed self-exemplars,'' \emph{Proc. CVPR}, June 2015.

\bibitem{tachella2023deepinverse}
J.~Tachella, D.~Chen, S.~Hurault, M.~Terris, and A.~Wang, ``{D}eep{I}nverse:
  {A} deep learning framework for inverse problems in imaging,'' \emph{URL:
  https://deepinv.github.io/deepinv}, 2023.

\bibitem{zhang2023practical}
K.~Zhang, Y.~Li, J.~Liang, J.~Cao, Y.~Zhang, H.~Tang, D.-P. Fan, R.~Timofte,
  and L.~V. Gool, ``Practical blind image denoising via {S}win-{C}onv-{UN}et
  and data synthesis,'' \emph{Mach. Intell. Res.}, vol.~20, no.~6, pp.
  822--836, 2023.

\end{thebibliography}

\end{document}